\documentclass[aps,amsmath,amssymb,twocolumn]{revtex4}
\usepackage{graphicx}
\usepackage{pgfplots}
\pgfplotsset{compat=1.17}
\usepackage{dcolumn}
\usepackage{color}
\usepackage{scrextend}
\usepackage{subfig}
\usepackage{bm}
\usepackage{amsmath,amssymb,graphicx}
\usepackage{epsfig}
\usepackage{enumerate}
\usepackage{array}
\usepackage{multirow}
\usepackage{tikz}
\usepackage{ragged2e}
\usepackage{textgreek}
\usepackage[normalem]{ulem}
 \usetikzlibrary{arrows.meta,positioning,fit,calc}
\usepackage{lipsum}
\usepackage{ntheorem}
\usepackage{xcolor}
\usepackage{hyperref}

\definecolor{KTBlue}{HTML}{2166AC}
\definecolor{KTBlueFill}{HTML}{EAF2F8}
\definecolor{KTOrange}{HTML}{C4511B}
\definecolor{KTOrangeFill}{HTML}{FAEEE8}
\definecolor{KTInk}{HTML}{202124}
\definecolor{KTMuted}{HTML}{687078}
\definecolor{KTRule}{HTML}{D5DAE0}
\definecolor{NHBlue}{HTML}{22577A}
\definecolor{PrepOrange}{HTML}{D97757}
\definecolor{ReadTeal}{HTML}{238B84}
\definecolor{Ink}{HTML}{24313F}
\definecolor{SoftGray}{HTML}{F4F6F8}

\newtheorem{prop}{Proposition}

\begin{document}
\title{Krylov Tomography and Finite-Uncertainty\\ Certification of Exceptional-Point Dynamics}
\author{Aritra Ghosh\footnote{aritraghosh500@gmail.com} and M. Bhattacharya}
\affiliation{School of Physics and Astronomy, Rochester Institute of Technology, 84 Lomb Memorial Drive, Rochester, New York 14623, USA}
\vskip-2.8cm
\date{\today}
\vskip-0.9cm

\vspace{5mm}
\begin{abstract}
Exceptional points (EPs) can produce striking responses, but locating one does not reveal how much of its dynamics an excitation accesses, which responses a detector distinguishes, or whether a missing signal is absent or undetected. We introduce Krylov tomography, a preparation- and measurement-aware framework that uses time-resolved data and models with stated uncertainty limits to answer these questions, while also bounding offset-induced departures from exact-EP behavior. As an explicit illustration of the general framework, we present finite-precision simulations of a red-sideband optomechanical model, whose second-moment coherence sector contains a third-order EP, certifying preparation-sensitive access to two and three directions. Krylov tomography thus bridges EP structure and finite-precision measurements, providing a general framework for non-Hermitian systems.
\end{abstract}
\maketitle

\textit{Introduction.} Non-Hermitian generators arise in diverse open systems \cite{Rotter_2009}. At an exceptional point (EP), multiple eigenvalues and eigenvectors coalesce, and the generator develops a nontrivial Jordan structure \cite{Berry_2004,Heiss_2012,Bagarello_2015,Longhi_2017,Bergholtz_2017}. Such degeneracies have been observed experimentally \cite{Dembowski_2001,Lee_2009,Kim_2016,Bergman_2021,Xiao_2021,Soleymani_2022} and are commonly probed through spectral coalescence, response poles, and polynomial-in-time transients \cite{Cartarius_2011,Minganti_2019}. In selected systems, the eigenvector structure can also be accessed \cite{Naghiloo_2019,ChenJordan_2020,ChenTomography_2025}. These probes, together with the system model, establish the exceptional structure available in the generator but not automatically the portion of the Jordan chain actually accessed by a particular experiment, since an EP can produce different pole orders in the measured response under different excitation and detection channels \cite{Hashemi_2022}, while its response strength is not determined by its order alone \cite{Wiersig_2022}. 

An experiment comprises preparation, evolution, and readout [Fig. \ref{fig1}]. Preparation determines the activated depth, defined as the number of successive nonzero vectors obtained by repeatedly applying the generator shifted by the EP eigenvalue to the prepared state \cite{Horn_2013}, forming the preparation-generated Krylov chain. Readout maps each chain vector to an output vector, and the reached directions are independently resolved when those outputs are linearly independent. A nonzero direction with vanishing measured image is detector-dark, and the resulting apparent endpoint is detector blindness. Thus the same generator can yield preparation-dependent activated depths [Fig. \ref{fig2}] and readout-dependent numbers of resolved directions. Because the generator's EP order can overstate the dynamics actually excited and resolved, without preparation- and readout-aware finite-error tests, neither the activated depth nor whether a missing response reflects chain termination or detector blindness can be established.

\begin{figure}[t]
\centering
\resizebox{0.90\columnwidth}{!}{
\begin{tikzpicture}[
    x=1cm,
    y=1cm,
    >=Latex,
    font=\sffamily,
    prepbox/.style={
        draw=PrepOrange,
        fill=PrepOrange!8,
        line width=0.9pt,
        rounded corners=3pt,
        minimum width=2.05cm,
        minimum height=1.55cm,
        inner sep=4pt,
        align=center
    },
    systembox/.style={
        draw=NHBlue,
        fill=NHBlue,
        line width=0.9pt,
        rounded corners=4pt,
        minimum width=2.75cm,
        minimum height=1.55cm,
        inner sep=4pt,
        align=center,
        text=white
    },
    readbox/.style={
        draw=ReadTeal,
        fill=ReadTeal!8,
        line width=0.9pt,
        rounded corners=3pt,
        minimum width=2.05cm,
        minimum height=1.55cm,
        inner sep=4pt,
        align=center
    },
    flow/.style={
        -{Latex[length=2.4mm,width=1.7mm]},
        line width=1.15pt
    },
    smalllabel/.style={
        font=\sffamily\scriptsize,
        text=Ink!78,
        align=center
    }
]

\node[prepbox]   (prep) at (0,0) {};
\node[systembox] (sys)  at (3.40,0) {};
\node[readbox]   (read) at (6.80,0) {};

\draw[
    PrepOrange,
    line width=1.15pt,
    line cap=round,
    line join=round
]
    (-0.58,0.31)
    --
    (-0.34,0.31)
    .. controls (-0.21,0.31) and (-0.16,0.66)
       .. (-0.06,0.66)
    .. controls (0.04,0.66) and (0.12,0.16)
       .. (0.21,0.16)
    .. controls (0.31,0.16) and (0.39,0.31)
       .. (0.58,0.31);

\foreach \x in {3.04,3.40,3.76}{
    \fill[white] (\x,0.31) circle (2pt);
}

\draw[
    white!88,
    line width=0.8pt,
    -{Latex[length=1.35mm,width=1mm]}
]
    (3.11,0.31)
    --
    (3.33,0.31);

\draw[
    white!88,
    line width=0.8pt,
    -{Latex[length=1.35mm,width=1mm]}
]
    (3.47,0.31)
    --
    (3.69,0.31);

\draw[ReadTeal!78,line width=0.65pt]
    (6.22,0.16) -- (6.22,0.53)
    (6.22,0.16) -- (7.38,0.16);

\draw[
    ReadTeal,
    line width=1.15pt,
    line cap=round
]
    (6.28,0.24)
    .. controls (6.48,0.66) and (6.71,0.54)
       .. (6.82,0.44)
    .. controls (7.00,0.16) and (7.18,0.26)
       .. (7.22,0.30)
    .. controls (7.31,0.39) and (7.42,0.33)
       .. (7.36,0.32);

\node[
    font=\sffamily\scriptsize\bfseries,
    text=PrepOrange!85!black
]
    at (0,-0.18)
    {PREPARATION};

\node[
    font=\sffamily\tiny,
    text=Ink!72
]
    at (0,-0.47)
    {controlled excitation};

\node[
    font=\sffamily\scriptsize\bfseries,
    text=white
]
    at (3.40,-0.14)
    {NON-HERMITIAN};

\node[
    font=\sffamily\tiny\bfseries,
    text=white!88
]
    at (3.40,-0.43)
    {SYSTEM DYNAMICS};

\node[
    font=\sffamily\scriptsize\bfseries,
    text=ReadTeal!82!black
]
    at (6.80,-0.18)
    {READOUT};

\node[
    font=\sffamily\tiny,
    text=Ink!72
]
    at (6.80,-0.47)
    {measured response};

\draw[flow,draw=PrepOrange]
    (prep.east)
    --
    (sys.west)
    node[midway,above=4pt,smalllabel]
    {input};

\draw[flow,draw=ReadTeal]
    (sys.east)
    --
    (read.west)
    node[midway,above=4pt,smalllabel]
    {output};

\node[
    font=\sffamily\tiny,
    text=PrepOrange!80!black
]
    at (1.70,-0.95)
    {selects activation};

\node[
    font=\sffamily\tiny,
    text=ReadTeal!78!black
]
    at (5.10,-0.95)
    {sets distinguishability};

\node[
    font=\sffamily\scriptsize\bfseries,
    text=NHBlue
]
    at (3.40,1.06)
    {available Jordan structure};

\node[
    fill=SoftGray,
    rounded corners=2pt,
    inner xsep=8pt,
    inner ysep=4pt,
    text=Ink
]
    at (3.40,-1.48)
    {
        \sffamily\scriptsize\bfseries
        Krylov tomography
        \quad
        \mdseries
        links available, activated, and resolved structure
    };

\end{tikzpicture}
}
\caption{\justifying{\textbf{Preparation-evolution-readout structure.} The generator fixes available Jordan chains, preparation fixes how many successive directions are reached, and readout fixes which reached directions produce linearly independent measured signals.}}
\label{fig1}
\end{figure}

To address this important operational gap, we develop Krylov tomography in finite-dimensional closed moment sectors, i.e., where the moments evolve among themselves, permitting finite-matrix predictions. Such sectors are pervasive in Gaussian bosonic and fermionic systems and in symmetry-resolved open spin dynamics \cite{Prosen_2008,Barthel_2022,Buca_2012}. Whereas Krylov methods also underpin estimation of quantum Fisher information \cite{Zhang_2025}, transport-based tomography \cite{Bourgeois_2026}, and observability-based dynamical quantum tomography \cite{Peruzzo_2026}, our central advance is, to our knowledge, the first finite-error certification theory for preparation- and readout-selected Jordan dynamics. That is, it certifies, under stated uncertainty bounds, which directions are reached and independently resolved, and whether the response terminates within tolerance.

Using time-polynomial coefficients with calibrated dynamics and readout, the method separates available Jordan order, activated depth, visible depth, and independently resolved depth, defined below. A positive detector-sensitivity lower bound rules out blindness among allowed endpoint directions. Simultaneous coefficient bounds lower-bound activated depth and certify independent resolution; a nonzero next coefficient certifies nontermination, whereas readout-gain and characteristic-rate bounds upper-bound the normalized termination residual. Because uncertainties in model parameters and approximately constant offsets  impede establishing and maintaining exact-EP operation \cite{Bergman_2021,Soleymani_2022,Mortensen_2018}, we also bound departures from it analytically. Here ``exact EP'' means exact eigenvalue-eigenvector coalescence, whereas ``near EP'' denotes a nearby point where the same modes remain trackable without assuming exact coalescence. ``Calibration'' means model information obtained independently of the tested record with stated uncertainty bounds; quantities inferred from that record instead have simultaneous uncertainty bounds. The exact- and near-EP formulations apply whenever the relevant closed sector, isolated EP contribution, and detector response are independently established.

\begin{figure}[t!]
\centering
\begin{tikzpicture}[
    >=Latex,
    font=\sffamily,
    x=1cm,y=1cm,
    state/.style={
        draw=black!65,
        rounded corners=2pt,
        minimum width=0.95cm,
        minimum height=0.50cm,
        align=center,
        font=\footnotesize,
        fill=black!3
    },
    active/.style={
        state,
        draw=blue!60!black,
        very thick,
        fill=blue!7
    },
    inactive/.style={
        state,
        draw=black!20,
        text=black!35,
        fill=black!2
    },
    zero/.style={
        draw=black!40,
        rounded corners=2pt,
        minimum width=0.50cm,
        minimum height=0.46cm,
        align=center,
        font=\footnotesize,
        fill=white
    },
    arrow/.style={->, thick, draw=black!60},
    bluearrow/.style={->, very thick, draw=blue!60!black},
    rowlab/.style={font=\scriptsize, align=right},
    note/.style={
        draw=black!18,
        rounded corners=3pt,
        fill=black!2,
        align=left,
        inner sep=4pt,
        font=\scriptsize,
        text width=6.3cm
    }
]

\node[rowlab, anchor=east] at (1.05,-4.20) {prepare\\$e_3$};
\node[active] (b13) at (1.95,-4.20) {$e_3$};
\node[active] (b12) at (3.20,-4.20) {$e_2$};
\node[active] (b11) at (4.45,-4.20) {$e_1$};
\node[zero]   (b10) at (5.45,-4.20) {$0$};
\draw[bluearrow] (b13) -- (b12);
\draw[bluearrow] (b12) -- (b11);
\draw[bluearrow] (b11) -- (b10);
\node[font=\scriptsize\bfseries, text=blue!60!black, anchor=west]
at (5.85,-4.20) {activated depth 3};

\node[rowlab, anchor=east] at (1.05,-5.20) {prepare\\$e_2$};
\node[inactive] (b23) at (1.95,-5.20) {$e_3$};
\node[active]   (b22) at (3.20,-5.20) {$e_2$};
\node[active]   (b21) at (4.45,-5.20) {$e_1$};
\node[zero]     (b20) at (5.45,-5.20) {$0$};
\draw[arrow,dashed] (b23) -- (b22);
\draw[bluearrow] (b22) -- (b21);
\draw[bluearrow] (b21) -- (b20);
\node[font=\scriptsize\bfseries, anchor=west]
at (5.85,-5.20) {activated depth 2};

\node[rowlab, anchor=east] at (1.05,-6.20) {prepare\\$e_1$};
\node[inactive] (b33) at (1.95,-6.20) {$e_3$};
\node[inactive] (b32) at (3.20,-6.20) {$e_2$};
\node[active]   (b31) at (4.45,-6.20) {$e_1$};
\node[zero]     (b30) at (5.45,-6.20) {$0$};
\draw[arrow,dashed] (b33) -- (b32);
\draw[arrow,dashed] (b32) -- (b31);
\draw[bluearrow] (b31) -- (b30);
\node[font=\scriptsize\bfseries, anchor=west]
at (5.85,-6.20) {activated depth 1};

\end{tikzpicture}

\caption{\justifying In a closed moment sector, an available size-three Jordan block with generalized eigenvectors $e_j$ can have activated depths three, two, or one. The generator fixes available order but preparation dictates which directions are reached.}
\label{fig2}
\end{figure}

We illustrate our framework with a linearized red-sideband cavity-optomechanical model, where the resonant beam-splitter interaction coherently swaps excitation quanta, namely, cavity photons and mechanical phonons \cite{Aspelmeyer_2014}. Optomechanics provides a versatile platform for investigating EPs, both theoretically \cite{Jing_2017,Ghosh_2026} as well as experimentally \cite{Xu_2016}. A second-order EP (EP2) in the field-amplitude dynamics induces a size-three Jordan block in the number-conserving second-moment sector formed by the two populations and one cavity-mechanics coherence quadrature, while the orthogonal coherence quadrature forms a decoupled size-one block, giving a $3\oplus1$ partition, i.e., a multiblock structure \cite{Shiralieva_2026}. A cavity-noise preparation that excites only the cavity occupation activates depth three, whereas the difference of separate cavity- and mechanical-excitation runs activates depth two. The former can be realized by driving the two cavity quadratures with independent, identically distributed zero-mean Gaussian noises, yielding zero average field without preferred optical phase. Related band-limited microwave-noise injection has earlier been demonstrated experimentally \cite{Wang_2024}. Finite-imprecision simulations from a three-output model illustrate certification of three and two independently resolved directions, respectively, and yield distinct termination bounds.

\textit{Preparation access and readout resolution.}
A Jordan block affects an experiment only through the sequence of directions generated from its preparation. After the preparation pulse and subtraction of any stationary background, on a closed moment sector $\mathcal{M}$, let $x(t)\in\mathcal{M}$ denote the vector of retained excess moments and let $A:\mathcal{M}\to\mathcal{M}$ denote the time-independent drift matrix governing their closed linear evolution $\dot{x}=Ax$. For an isolated EP eigenvalue $\lambda$, let $\Pi_\lambda$ denote the spectral projector that extracts the contribution associated with $\lambda$, and let $\mathcal{G}_\lambda=\operatorname{ran}\Pi_\lambda$ be the corresponding generalized eigenspace. We define $N_\lambda=(A-\lambda I)|_{\mathcal{G}_\lambda}$ and let $\nu_\lambda=\min\{\nu\geq1:N_\lambda^\nu=0\}$ be the nilpotency index of $N_\lambda$, i.e., the size of the largest Jordan block at $\lambda$. For a prepared initial moment vector $x_0$, we let $x_\lambda=\Pi_\lambda x_0\neq0$ denote its component in $\mathcal{G}_\lambda$. The projected preparation $x_\lambda$ and a fixed readout map $C:\mathcal{M}\to\mathcal{Z}$ to a finite-dimensional output space $\mathcal{Z}$ generate the Krylov vectors $q_j=N_\lambda^jx_\lambda$ and their measured images $p_j=Cq_j$. The closure of $\mathcal{M}$, isolation of the $\lambda$ contribution, and the calibration of $C$ are independent modeling inputs. With $z_\lambda(t)=Cx_\lambda(t)$, the isolated state and output trajectories are
\begin{equation}
x_\lambda(t)
=
e^{\lambda t}
\sum_{j=0}^{\nu_\lambda-1}\frac{t^j}{j!}q_j,
\quad \quad
z_\lambda(t)
=
e^{\lambda t}
\sum_{j=0}^{\nu_\lambda-1}\frac{t^j}{j!}p_j .
\label{main_krylov_response}
\end{equation}
For sampling times $t_\ell$, $\ell=0,\ldots,L-1$, the polynomial sampling matrix is $V_{\ell j}=t_\ell^j/j!$, with $j=0,\ldots,\nu_\lambda-1$, so coefficients $p_j$ are identifiable whenever $V$ has full column rank. The preparation-activated depth is $d_{\rm act} = \min\{r\geq1:q_r=0\}$, i.e., $d_{\rm act}=r$ $\Longleftrightarrow$ $q_{r-1}\neq0$ and $q_r=0$. If $q_{r-1}\neq0$, then $q_0,\ldots,q_{r-1}$ are independent and if also $q_r=0$, their invariant span carries a size-$r$ Jordan block at $\lambda$. Thus a preparation may only probe the part of the available Jordan structure that it actually reaches. 

Preparation alone does not ensure experimental resolution because the detector sees only $p_j=Cq_j$. Let $P_r=(p_0,\ldots,p_{r-1})$. The visible depth $d_{\rm vis}^{(z)}$ is one plus the largest index with $p_j\neq0$, or zero for a vanishing response, while the independently resolved depth $d_{\rm rank}^{(z)}$ is the largest $r$ for which $p_0,\ldots,p_{r-1}$ are linearly independent. We thus have the depth hierarchy
\begin{equation}
d_{\rm rank}^{(z)}
\leq
d_{\rm vis}^{(z)}
\leq
d_{\rm act}
\leq
\nu_\lambda.
\label{main_depth_hierarchy}
\end{equation}
The inequalities follow because full column rank requires $p_{r-1}\neq0$, $p_j\neq0$ implies $q_j\neq0$, and $N_\lambda^{\nu_\lambda}=0$ forces $q_{\nu_\lambda}=0$. Thus $p_{r-1}\neq0$ certifies $d_{\rm act}\geq r$ even for a scalar output. Full projected rank is stronger: if $d_{\rm act}\geq r$, then $Q_r=(q_0,\ldots,q_{r-1})$ maps $\mathbb{C}^r$ bijectively onto $\mathcal{K}_r=\operatorname{span}\{q_0,\ldots,q_{r-1}\}$. Since $P_r=CQ_r$, it has rank $r$ exactly when $C$ is injective on $\mathcal{K}_r$, requiring at least $r$ effective output coordinates. This parallels the control-theory distinction between input-generated and output-distinguished directions \cite{Kalman_1960,HoKalman_1966}. A null output remains ambiguous because nonzero $q_r$ may lie in $\ker C$.

Let $\|\cdot\|_x$ and $\|\cdot\|_z$ be fixed norms on $\mathcal{M}$ and $\mathcal{Z}$, respectively. To test true termination, we fix before examining the candidate endpoint a terminal subspace $\mathcal{U}_r^{\rm end}$ containing every possible pair $q_{r-1},q_r$ allowed by the stated uncertainty bounds, and define
\begin{equation}
\underline\alpha_C
=
\inf_{\substack{v\in\mathcal{U}_r^{\rm end}\\\|v\|_x=1}}\|Cv\|_z,
\quad \quad
\overline\beta_C
=
\sup_{\substack{v\in\mathcal{U}_r^{\rm end}\\\|v\|_x=1}}\|Cv\|_z .
\label{main_readout_gains}
\end{equation}
The quantities $\underline\alpha_C$ and $\overline\beta_C$ are, respectively, the weakest and strongest output magnitudes per unit state norm over these allowed endpoint directions. If $\underline\alpha_C>0$, no allowed terminal direction is dark, and $p_{r-1}\neq0$, $p_r=0$ imply $d_{\rm act}=r$.

\textit{Finite-error certification and experimental bounds.}
With finite data, a small fitted coefficient does not establish that it is zero, because nonzero values may remain consistent with its uncertainty. Let $\widehat{p}_j$ denote the reconstructed coefficient and suppose that $\|\widehat{p}_j-p_j\|_z\leq\varepsilon_j$ holds simultaneously for every tested coefficient. At a tested endpoint $r$, $p_{r-1}$ is the last expected nonzero (predecessor) coefficient, whereas $p_r$ is the first coefficient beyond the endpoint (terminal). Then
\begin{equation}
\begin{aligned}
\|\widehat{p}_{r-1}\|_z>\varepsilon_{r-1}
&\Longrightarrow d_{\rm act}\geq r,\\
\|\widehat{p}_r\|_z>\varepsilon_r
&\Longrightarrow d_{\rm act}\geq r+1 .
\end{aligned}
\label{main_nonzero_certificates}
\end{equation}
The second implication certifies nontermination and requires no readout injectivity. Independent resolution requires a matrix test. Let $\widehat{P}_r=(\widehat{p}_0,\cdots,\widehat{p}_{r-1})$. Writing
$\|w\|_z^2=w^\dagger W_zw$, with $W_z=W_z^\dagger>0$, and setting $D_r(\Omega_{\rm sc})=\operatorname{diag}(1,\Omega_{\rm sc}^{-1},\cdots,\Omega_{\rm sc}^{-(r-1)})$ and $\eta_r^2=\sum_{j=0}^{r-1}\Omega_{\rm sc}^{-2j}\varepsilon_j^2$, one has
\begin{equation}
\sigma_{\min}^{\rm col}\!\left(
W_z^{1/2}\widehat{P}_rD_r(\Omega_{\rm sc})
\right)
>\eta_r \quad\Longrightarrow\quad
\operatorname{rank}P_r=r,
\label{main_rank_certificate}
\end{equation}
where $\sigma_{\min}^{\rm col}(M) = \inf_{\|v\|_2=1}\|Mv\|_2$ and the positive reference rate $\Omega_{\rm sc}$ is introduced only to balance the differently scaled columns numerically. Because the same simultaneous event controls every $\Omega_{\rm sc}>0$, it may be optimized after reconstruction while the fit, data, metric (norm), and uncertainty radii remain fixed. The rank test addresses instantaneous measured-direction independence. Bounding termination of the underlying state sequence additionally requires a positive detector-sensitivity bound over the possible endpoint directions and a physical rate scale.

For a linear map $T$ with domain restricted to a subspace $\mathcal{S}$, we write $\|T\|_{a\to b;\mathcal{S}}=\sup_{\substack{v\in\mathcal{S}\\\|v\|_a=1}}\|Tv\|_b$ for its restricted induced operator norm. To quantify whether the Krylov sequence has effectively terminated at order $r$, for $q_{r-1}\neq0$ we define the normalized termination residual $\mathcal{R}_r^{(x)}=\|q_r\|_x/
[\Omega_*\|q_{r-1}\|_x]$, where $\Omega_*=\|N_\lambda\|_{x\to x;\mathcal{G}_\lambda}>0$, and then $0\leq\mathcal{R}_r^{(x)}\leq1$. Let $0<\Omega_K^{\rm L}\leq\Omega_*$ be a calibrated lower rate bound, and suppose the nominal readout satisfies $\widetilde{C}=C+\Delta C$ and $\|\Delta C\|_{x\to z;\mathcal{U}_r^{\rm end}}\leq\delta_C^{\rm end}$, and $\underline\alpha_{\widetilde{C}}$ and $\overline\beta_{\widetilde{C}}$ denote the gains in Eq. (\ref{main_readout_gains}) with $C$ replaced by $\widetilde{C}$. If $\underline\alpha_{\widetilde{C}}>\delta_C^{\rm end}$, we define $\underline\alpha_C^{\rm L}=\underline\alpha_{\widetilde{C}}-\delta_C^{\rm end}>0$ and
$\overline\beta_C^{\rm U}=\overline\beta_{\widetilde{C}}+\delta_C^{\rm end}$. Whenever $\|\widehat{p}_{r-1}\|_z>\varepsilon_{r-1}$, one then has the bound
\begin{equation}
\mathcal{R}_r^{(x)}
\leq
\frac{\overline\beta_C^{\rm U}}
     {\underline\alpha_C^{\rm L}}
\frac{\|\widehat{p}_r\|_z+\varepsilon_r}
{\Omega_K^{\rm L}
 \bigl(\|\widehat{p}_{r-1}\|_z-\varepsilon_{r-1}\bigr)} .
\label{main_closure_bound}
\end{equation}
For a termination tolerance $\tau_{\rm term}$ fixed in advance, we say that the response is terminated to tolerance at order $r$ when $\mathcal{R}_r^{(x)}\leq\tau_{\rm term}$. Thus a right-hand side in Eq. (\ref{main_closure_bound}) no larger than $\tau_{\rm term}$ certifies this statement, although failure gives no conclusion and finite uncertainty cannot prove exact zero.

Before analysis, one specifies the closed sector, candidate exponential rate or its fitting procedure, any additional exponential rates included in the fit, retained polynomial order, effective readout and, for bounding the termination residual, the terminal subspace $\mathcal{U}_r^{\rm end}$ and rate bound $\Omega_K^{\rm L}$. Exponents estimated from the same record require their correlations to be included in the simultaneous coefficient bounds. One then checks the complete sampling matrix and reconstructs $\widehat{p}_j$. If the analysis center is $\widehat{\lambda}=\lambda+\Delta\lambda$, where $\Delta\lambda$ is a centering error, the latter mixes the Krylov vectors of different orders and for an otherwise terminating sequence produces the leading spurious terminal term $q_r^{(\widehat{\lambda})}=-r\Delta\lambda q_{r-1}+O(|\Delta\lambda|^2)$, while fitting over a finite window produces estimator-dependent bias. These contributions must enter every coefficient radius used by the selected certificate.

In real experiments, achieving and stabilizing operation exactly at an EP is very challenging \cite{Bergman_2021,Soleymani_2022,Mortensen_2018}, and in the absence of perfect coalescence, the relevant robustness question is how large the nominally absent next response can get due to a small offset from the EP. As before, we suppose that $N_\lambda^r x_\lambda=0$ at the EP, while $N_\lambda^{r-1}x_\lambda\neq0$. We express the nearby drift and preparation using the same coordinates as at the EP and subtract the average of the eigenvalues that coalesce, so uncertainties in both operations enter the bounds. Thus $B=N_\lambda+E$ and $x=x_\lambda+\eta$, with $\|E\|_{x\to x}\leq\delta_A$ and $\|\eta\|_x\leq\delta_x$. Defining the dimensionless quantities $\epsilon=\delta_A/\|N_\lambda\|_{x\to x}$ and $\zeta=\delta_x/\|x_\lambda\|_x$, the strength of the last nonzero EP response is measured by $\rho_{r-1}=\|N_\lambda^{r-1}x_\lambda\|_x/[(\|N_\lambda\|_{x\to x})^{r-1}\|x_\lambda\|_x]$, and the accumulated error by $\chi_j=(1+\epsilon)^j(1+\zeta)-1$. While $\chi_j$ is the worst-case error accumulated through $j$ applications of the drift, $\rho_{r-1}$ measures how strongly the preparation reaches its last EP response. It directly follows that the ``reopened'' response, quantified by $\mathcal{R}_r(B,x) \equiv \|B^r x\|_x/(\|B\|_{x\to x}\|B^{r-1}x\|_x)$, obeys the bound
\begin{equation}
\begin{aligned}
\mathcal{R}_r(B,x)
\leq
\frac{\chi_r}
{(1-\epsilon)(\rho_{r-1}-\chi_{r-1})},
\end{aligned}
\label{main_near_ep_analytic_bound}
\end{equation}
whenever $\epsilon<1$ and $\chi_{r-1}<\rho_{r-1}$. For small errors, the right-hand side is $(r\epsilon+\zeta)/\rho_{r-1}+O((\epsilon+\zeta)^2)$, so the endpoint remains perturbatively stable, although the guarantee weakens when its last nonzero EP response is small. The same estimates preserve $r$ independent physical response directions whenever the bounded changes are too small to make the exact-EP directions linearly dependent. Away from exact coalescence, this concerns the dimension of the preparation-generated Krylov response, not the Jordan depth of the nearby generator. Complete technical analysis of the certification framework is provided in Supplementary Sec. \ref{supp:krylov_certification_proofs}.

\begin{figure}[!t]
\centering

\resizebox{0.98\columnwidth}{!}{
\begin{tikzpicture}[
    >=Latex,
    font=\sffamily,
    state3/.style={
        draw=KTBlue!90!black,
        fill=KTBlue!9,
        text=KTInk,
        line width=0.85pt,
        rounded corners=2.5pt,
        minimum width=1.58cm,
        minimum height=0.58cm,
        inner sep=2pt,
        align=center,
        font=\scriptsize
    },
    state2/.style={
        draw=KTOrange!90!black,
        fill=KTOrange!9,
        text=KTInk,
        line width=0.85pt,
        rounded corners=2.5pt,
        minimum width=1.58cm,
        minimum height=0.58cm,
        inner sep=2pt,
        align=center,
        font=\scriptsize
    },
    zero3/.style={
        draw=KTBlue!90!black,
        fill=white,
        text=KTBlue!90!black,
        line width=0.85pt,
        rounded corners=2.5pt,
        minimum width=0.58cm,
        minimum height=0.58cm,
        inner sep=1pt,
        align=center,
        font=\scriptsize
    },
    zero2/.style={
        draw=KTOrange!90!black,
        fill=white,
        text=KTOrange!90!black,
        line width=0.85pt,
        rounded corners=2.5pt,
        minimum width=0.58cm,
        minimum height=0.58cm,
        inner sep=1pt,
        align=center,
        font=\scriptsize
    },
    ghost/.style={
        draw=KTRule,
        dashed,
        fill=white,
        line width=0.65pt,
        rounded corners=2.5pt,
        minimum width=0.58cm,
        minimum height=0.58cm,
        inner sep=1pt
    },
    arrow3/.style={
        -{Latex[length=1.9mm,width=1.25mm]},
        draw=KTBlue!90!black,
        line width=0.95pt,
        shorten <=3pt,
        shorten >=3pt
    },
    arrow2/.style={
        -{Latex[length=1.9mm,width=1.25mm]},
        draw=KTOrange!90!black,
        line width=0.95pt,
        shorten <=3pt,
        shorten >=3pt
    },
    row3/.style={
        anchor=east,
        align=right,
        text=KTBlue!85!black,
        font=\scriptsize,
        text width=1.25cm
    },
    row2/.style={
        anchor=east,
        align=right,
        text=KTOrange!85!black,
        font=\scriptsize,
        text width=1.25cm
    },
    order/.style={
        text=KTMuted,
        font=\scriptsize
    },
    endpoint/.style={
        text=KTMuted,
        font=\tiny,
        align=center
    }
]

\path[use as bounding box]
    (-1.35,-1.72) rectangle (7.12,1.08);

\node[order] at (1.00,0.84) {$q_0$};
\node[order] at (3.00,0.84) {$q_1$};
\node[order] at (5.00,0.84) {$q_2$};
\node[order] at (6.65,0.84) {$q_3$};

\node[row3] (lab3) at (-0.10,0.18)
    {cavity-noise\\preparation};

\node[state3] (a0) at (1.00,0.18)
    {$(1,0,0)$};
\node[state3] (a1) at (3.00,0.18)
    {$-d(1,0,1)$};
\node[state3] (a2) at (5.00,0.18)
    {$\dfrac{d^2}{2}(1,1,2)$};
\node[zero3] (a3) at (6.65,0.18)
    {$0$};

\draw[arrow3] (a0.east) -- (a1.west);
\draw[arrow3] (a1.east) -- (a2.west);
\draw[arrow3] (a2.east) -- (a3.west);

\node[row2] (lab2) at (-0.10,-1.03)
    {differential\\preparation};

\node[state2] (b0) at (1.00,-1.03)
    {$(1,-1,0)$};
\node[state2] (b1) at (3.00,-1.03)
    {$-d(1,1,2)$};
\node[zero2] (b2) at (5.00,-1.03)
    {$0$};
\node[ghost] (b3) at (6.65,-1.03)
    {\phantom{$0$}};

\draw[arrow2] (b0.east) -- (b1.west);
\draw[arrow2] (b1.east) -- (b2.west);

\node[endpoint] at (6.65,-1.60)
    {beyond\\ endpoint};

\end{tikzpicture}
}

\par\vspace{0.25mm}
{\sffamily\small (a)\par}

\vspace{1.2mm}

\noindent
\begin{minipage}[t]{0.490\columnwidth}
    \centering
    \includegraphics[width=\linewidth]
    {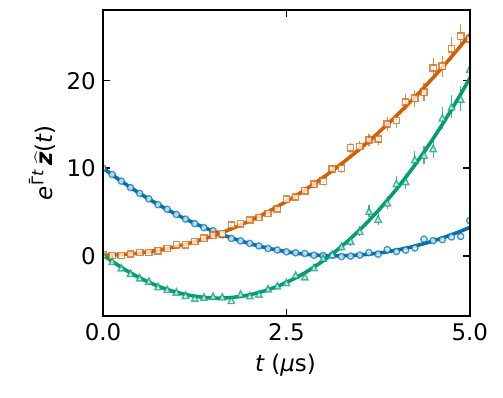}
    \par\vspace{0.25mm}
    {\sffamily\small (b)\par}
\end{minipage}
\hfill
\begin{minipage}[t]{0.490\columnwidth}
    \centering
    \includegraphics[width=\linewidth]
    {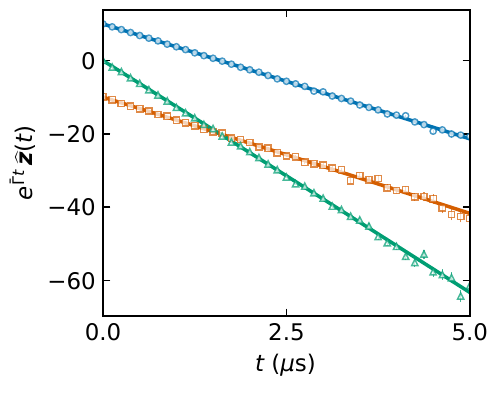}
    \par\vspace{0.25mm}
    {\sffamily\small (c)\par}
\end{minipage}
\caption{\justifying{\textbf{Preparation-dependent Krylov response.} (a) Successive action of $N_{\rm coh}$ [Eq. (\ref{main_optomech_ep3})] generates depth-three and depth-two chains from the cavity-noise and differential preparations under the same size-three Jordan block. Vectors use the basis $\boldsymbol{x}/n_{\rm p}$ with $\boldsymbol{x}=(\delta n_{\rm c},\delta n_{\rm m},\delta Y_{\rm cm})^{T}$,  and $d=(\kappa-\gamma_{\rm m})/2$. (b,c) Demodulated reconstructed outputs $e^{\bar\Gamma t}\widehat{\boldsymbol{z}}(t)$ for the depth-three cavity-noise (b) and depth-two differential (c) preparations, where $\bar\Gamma=(\kappa+\gamma_{\rm m})/2$. Blue circles, orange squares, and green triangles represent $\delta n_{\rm c}$, $\delta n_{\rm m}$, and $\delta Y_{\rm cm}$, respectively. The nominal reconstruction $C=\widetilde{C}=I_3$ gives $\mathbb{E}[\widehat{\boldsymbol{z}}(t)]=\boldsymbol{x}(t)$, so readout uncertainty affects only conservative termination bounds. Solid curves are generalized-least-squares fits using the full noise covariance, cubic in (b) and quadratic in (c); bars are marginal standard deviations of the demodulated records. In (c), occupations are signed differences of independently prepared physical Gaussian trajectories. Parameters are $\omega_{\rm m}/2\pi=1~{\rm MHz}$, $\kappa/2\pi=0.2~{\rm MHz}$, $\gamma_{\rm m}/2\pi=50~{\rm Hz}$, and $T_{\rm m}=4~{\rm K}$, lying within demonstrated resolved-sideband capabilities \cite{GroblacherCooling_2009,GroblacherStrongCoupling_2009,Weaver_2017}; the EP condition fixes $G=G_{\rm EP}=(\kappa-\gamma_{\rm m})/4$, and each physical run adds $n_{\rm p}=10$.}}
\label{fig3}
\end{figure}

\textit{Red-sideband optomechanical illustration.}
Optomechanics provides a concrete illustration of our formalism because in that case bilinear coherences can lift a field-amplitude EP2 into moment-space Jordan dynamics accessible through occupation-number measurements. We use the resonant red-sideband rotating-wave model \cite{Aspelmeyer_2014}, which retains the resonant excitation-exchange interaction $H_{\rm int}=\hbar G(a^\dagger b+ab^\dagger)$, where $a$ and $b$ are the annihilation operators for the cavity and mechanical fluctuations, respectively, and $G$ is the linearized optomechanical coupling. The cavity and mechanical damping rates are $\kappa$ and $\gamma_{\rm m}$, respectively. The field-amplitude EP2 is at $G_{\rm EP}=(\kappa-\gamma_{\rm m})/4$ for
$\kappa>\gamma_{\rm m}$ \cite{Ghosh_2026}. After subtracting the stationary Gaussian background, we define the excess photon and phonon occupations as $\delta n_{\rm c}=\delta\langle a^\dagger a\rangle$ and $\delta n_{\rm m}=\delta\langle b^\dagger b\rangle$, together with the cavity-mechanics coherence $\delta Y_{\rm cm}=2{\rm Im}[\delta\langle a^\dagger b\rangle]$. Writing the moment vector $\boldsymbol{x}=(\delta n_{\rm c},\delta n_{\rm m}, \delta Y_{\rm cm})^T$ and defining $\bar\Gamma=(\kappa+\gamma_{\rm m})/2$, at this EP, one has
\begin{equation}
\dot{\boldsymbol{x}} = \left(-\bar\Gamma I+N_{\rm coh}\right)\boldsymbol{x}, \quad
N_{\rm coh} =
G_{\rm EP}
\begin{pmatrix}
-2&0&1\\
0&2&-1\\
-2&2&0
\end{pmatrix},
\label{main_optomech_ep3}
\end{equation}
where $N_{\rm coh}^{3}=0$ and $N_{\rm coh}^{2}\neq0$, meaning that the invariant three-coordinate coherence sector carries a size-three Jordan block. The remaining $2{\rm Re}[\delta\langle a^\dagger b\rangle]$ forms a decoupled size-one block, giving a $3\oplus1$ partition of the coherence space formed by bilinears that conserve the total photon-plus-phonon excitation number. For $\boldsymbol{v}=(v_{\rm c},v_{\rm m},v_Y)^T$, we equip this three-coordinate sector with the norm $\|\boldsymbol{v}\|_{\rm coh}^2
=2|v_{\rm c}|^2+2|v_{\rm m}|^2+|v_Y|^2$. The corresponding induced norm of the shifted generator is $\Omega_*=\|N_{\rm coh}\|_{\rm coh\to coh}=(\kappa-\gamma_{\rm m})/\sqrt{2}$.

Preparation determines how much of this structure is activated. In the ideal limit of a pulse short enough that the system evolves negligibly during it, a cavity-noise preparation that initially increases only the cavity occupation realizes $\boldsymbol{x}^{(3)}_0=n_{\rm p}(1,0,0)^T$ and activates three successive response directions [Fig. \ref{fig3}(a)]. By contrast, subtracting the measured evolution of a run with an added mechanical occupation from that of an otherwise identical run with the same added cavity occupation defines the differential preparation $\boldsymbol{x}^{(2)}_0=n_{\rm p}(1,-1,0)^T$, which activates two directions [Fig. \ref{fig3}(a)]. Here $n_{\rm p}>0$ is the occupation added in each physical run, and the negative component of $\boldsymbol{x}^{(2)}_0$ arises from subtracting the two records. The latter preparation is implemented as the difference of two independently prepared physical Gaussian evolutions so that the covariance contributions from the two measured records add. Finite-duration preparation errors must be included in the preparation-error bound. Defining $d=(\kappa-\gamma_{\rm m})/2=2G_{\rm EP}$, the cavity responses are
\begin{equation}
\begin{aligned}
\delta n_{\rm c}^{(3)}(t)
&=
n_{\rm p}e^{-\bar\Gamma t}
\left(1-dt+\frac{d^2t^2}{4}\right),
\\
\delta n_{\rm c}^{(2)}(t)
&=
n_{\rm p}e^{-\bar\Gamma t}(1-dt).
\end{aligned}
\label{main_optomech_traces}
\end{equation}
One finite-imprecision simulation is shown in Fig. \ref{fig3}(b,c). Cubic and quadratic fits retain the first coefficient beyond their exact endpoints. A scalar cavity-output map is noninjective on the three-dimensional terminal sector and cannot bound its state-space termination residual, so we use a proposed three-output model: the cavity and mechanical populations, plus their coherence obtained by differencing two simulated occupation channels with opposite coherence signs. The nominal map is $C=\widetilde{C}=I_3$. Predeclared deterministic $2\%$ calibration margins on the readout map and characteristic rate, fixed independently of the record, give $\|\Delta C\|_{\rm coh\to coh}\leq0.02$, $\underline\alpha_C^{\rm L}=0.98$, $\overline\beta_C^{\rm U}=1.02$, and $\Omega_K^{\rm L}=0.98\Omega_*$; statistical fluctuations enter only the coefficient covariance. Separate joint $95\%$ regions covering all fitted coefficients give the predecessor and terminal ratios $\|\widehat{p}_2\|_z/\varepsilon_2>4.51$ and $\|\widehat{p}_3\|_z/\varepsilon_3<0.59$ for depth three, and $\|\widehat{p}_1\|_z/\varepsilon_1>31.9$ and $\|\widehat{p}_2\|_z/\varepsilon_2<0.25$ for depth two. The corresponding projected-rank tests certify $\operatorname{rank}P_3=3$ and $\operatorname{rank}P_2=2$, but no $95\%$ coverage is claimed for the intersection of these preparation-specific events. Let $B_r$ denote the right-hand side of Eq. (\ref{main_closure_bound}). Although $N_{\rm coh}^3=0$ and $N_{\rm coh}^2\boldsymbol{x}^{(2)}_0=0$ make both terminal vectors vanish analytically at the exact EP, applying the bound without these identities gives $B_3\simeq0.313495$ and $B_2\simeq0.031269$. Thus this finite-error test meets the stated $10\%$ tolerance only for depth two, and $B_3>0.10$ reflects limited finite-data precision. The noisy data certify both nonzero predecessors and projected ranks, while the exact identities fix both endpoints. These bounds assume the exact-EP generator and stated preparations.

The near-EP framework applies elegantly to the optomechanical example. Within the resonant red-sideband rotating-wave family, let us consider the differential preparation that activates two independent directions. For variations of $d$ and $G$ within this family, departure from the EP does not create a third independent Krylov direction, because the next vector returns along the initially prepared direction. With $d=(\kappa-\gamma_{\rm m})/2$ and taking $G>0$ by phase convention, direct multiplication gives
\begin{equation}
\boldsymbol{q}_2^{(2)} =(d^2-4G^2)\boldsymbol{q}_0^{(2)}, \quad \quad 
\mathcal{R}_2^{(2)}
=
\frac{|d^2-4G^2|}{d^2+4G^2}.
\label{main_optomech_near_ep_residual}
\end{equation}
The two amplitude eigenvalues obey $|\lambda_+-\lambda_-|^2=|d^2-4G^2|$, so the endpoint reopening is exactly the squared mode splitting normalized by $d^2+4G^2$. For $0\leq b<1$, the condition $\mathcal{R}_2^{(2)}\leq b$ is equivalent to the parameter window $\sqrt{(1-b)/(1+b)}\leq4G/(\kappa-\gamma_{\rm m})\leq\sqrt{(1+b)/(1-b)}$. By contrast, the cavity-noise preparation spans three independent response directions for every $G\neq0$. Robustness near the same EP therefore depends on preparation, just as the exact-EP depth does. Details of the optomechanical model calculations and simulations are given in Supplementary Sec. \ref{supp:optomech_realization}.

\textit{Conclusion.} We have shown that Krylov tomography separates generator-available, preparation-activated, visible, and independently resolved exceptional dynamics. At an EP, coefficients certified as nonzero bound activated depth, projected-rank tests certify independent resolution, and restricted readout gain distinguishes an endpoint from detector blindness; finite uncertainty yields preparation-specific upper bounds on the termination residual. Small bounded departures from the EP need not invalidate the construction, as the general bound controls endpoint reopening. We have illustrated our theoretical proposal using the physically prominent example of resonant red-sideband optomechanics. Krylov tomography thus turns EP structure of a physical system into a quantitative account of what an experiment actually prepares, resolves, and rules out, with controlled uncertainty.

\noindent
\textit{Acknowledgements.} A.G. and M.B. thank the Air Force Office of Scientific Research (FA9550-23-1-0259) for support.

\noindent
\textit{Data availability.}
Source codes to reproduce all the simulation results are available at \url{https://github.com/aritraghoshphysics/Krylov-Tomography-of-EP-Dynamics}.

\clearpage

\setcounter{section}{0}
\setcounter{equation}{0}
\setcounter{figure}{0}
\setcounter{table}{0}
\setcounter{page}{1}

\renewcommand{\thesection}{S\arabic{section}}
\renewcommand{\theequation}{S\arabic{equation}}
\renewcommand{\thefigure}{S\arabic{figure}}
\renewcommand{\thetable}{S\arabic{table}}
\renewcommand{\thepage}{S\arabic{page}}

\onecolumngrid 
\begin{center}
    \textbf{\large Supplementary Material for: Krylov Tomography and\\ Finite-Uncertainty Certification of Exceptional-Point Dynamics} \\
    \vspace{4mm}
    Aritra Ghosh and M. Bhattacharya,\\
    {\it School of Physics and Astronomy, Rochester Institute of Technology,\\ 84 Lomb Memorial Drive, Rochester, New York 14623, USA}
\end{center}

\section{Krylov tomography and certification}
\label{supp:krylov_certification_proofs}

We now describe the finite-error certification theory in detail. All spaces are finite dimensional over $\mathbb{C}$, so real moment systems are understood through complexification. If $A$ is real and $\lambda$ is nonreal, its contribution must be combined with the conjugate sector to obtain the real signal $2{\rm Re}[e^{\lambda t}\sum_jt^jq_j/j!]$. We use the following assumptions:
\begin{enumerate}
\item[(i)] After subtraction of any stationary part, the retained moments form a declared closed sector $\mathcal{M}$ and obey $\dot x=Ax$, with $A$ constant during the measurement record.
\item[(ii)] The candidate eigenvalue $\lambda$ is isolated within $A|_{\mathcal{M}}$, and its contribution is separated by an independently validated spectral or exponential-polynomial reconstruction. A symmetry-resolved invariant sector may likewise be selected only through independent validation.
\item[(iii)] The instantaneous readout is a fixed linear map $C:\mathcal{M}\to\mathcal{Z}$ into a finite-dimensional output space.
\item[(iv)] Exact statements are pointwise in the physical $A$, $\lambda$, $x_\lambda$, and $C$. Every finite-error statement is conditioned on one deterministic uncertainty set or joint confidence event containing all coefficient, readout, centering, preparation, and rate bounds invoked by that statement.
\end{enumerate}
No statistical independence among contributions to a simultaneous bound is assumed. The near-EP construction instead compares a calibrated family with the exact-EP point after identifying the corresponding isolated sectors.

These assumptions separate three logically distinct inputs. Closure identifies which moments obey a finite linear model; spectral isolation identifies which part of that model is being tested; and readout calibration determines what can be inferred about an unobserved state vector from its measured image. None follows from the other two. In particular, observing a polynomially modified exponential does not by itself validate the chosen closed sector or exclude unresolved neighboring exponents.

\subsection{Isolation and coefficient identifiability}
\label{supp:sector_isolation}

For a contour $\Gamma_\lambda$ enclosing no other eigenvalue of $A|_{\mathcal{M}}$, the Riesz projector is
\begin{equation}
\Pi_\lambda=\frac{1}{2\pi i}\oint_{\Gamma_\lambda}(zI-A)^{-1}dz,
\quad \quad
\mathcal{G}_\lambda=\operatorname{ran}\Pi_\lambda .
\label{supp_riesz_projection}
\end{equation}
To verify Eq. (\ref{supp_riesz_projection}), we choose $A=SJS^{-1}$. On a size-$m_\mu$ Jordan block $J_\mu=\mu I+N_\mu$, one has
\begin{equation}
(zI-J_\mu)^{-1}=\sum_{k=0}^{m_\mu-1}\frac{N_\mu^k}{(z-\mu)^{k+1}},
\end{equation}
so for $\mu\neq\lambda$ every term is analytic inside $\Gamma_\lambda$ and integrates to zero. For $\mu=\lambda$, only the $k=0$ term has a nonzero residue, equal to the identity. Thus $\Pi_\lambda$ is the identity on all blocks at $\lambda$ and zero on every other block. Consequently $\Pi_\lambda^2=\Pi_\lambda$, $[A,\Pi_\lambda]=0$, and its range is precisely $\mathcal{G}_\lambda$ \cite{KatoSupp,HornJohnsonSupp}.

The complementary projector $I-\Pi_\lambda$ is also invariant, so every preparation decomposes uniquely as $x_0=x_\lambda+x_\perp$, with $x_\lambda=\Pi_\lambda x_0$ and $x_\perp=(I-\Pi_\lambda)x_0$. All our depth statements concern only $x_\lambda$ and require $x_\lambda\neq0$, so contributions from $x_\perp$ must be removed or included explicitly in the reconstruction.

Now we define $x_\lambda=\Pi_\lambda x_0\neq0$, $N_\lambda=(A-\lambda I)|_{\mathcal{G}_\lambda}$, $\nu_\lambda=\min\{\nu\geq1:N_\lambda^\nu=0\}$, $q_j=N_\lambda^jx_\lambda$, and $p_j=Cq_j$. Since $A|_{\mathcal{G}_\lambda}=\lambda I+N_\lambda$ and the two terms commute, nilpotency gives
\begin{equation}
e^{At}x_\lambda=e^{\lambda t}\sum_{j=0}^{\nu_\lambda-1}\frac{t^j}{j!}q_j,
\label{supp_isolated_response}
\end{equation}
so applying $C$ gives
\begin{equation}
z_\lambda(t)=e^{\lambda t}\sum_{j=0}^{\nu_\lambda-1}\frac{t^j}{j!}p_j .
\label{supp_projected_response}
\end{equation}
Thus $q_j$ and $p_j$ are the Taylor coefficients at $t=0$ of the demodulated state and output, respectively. An exponential-polynomial fit directly reconstructs $p_j$, and the reconstruction of $q_j$ additionally requires state access or an injective calibrated readout.

At sampling times $t_\ell$, let $V_{\ell j}=t_\ell^j/j!$. Eq. (\ref{supp_projected_response}) becomes
\begin{equation}
\operatorname{col}_{\ell}\!\left[e^{-\lambda t_\ell}z_\lambda(t_\ell)\right]
=(V\otimes I_{\mathcal{Z}})\operatorname{col}(p_0,\ldots,p_{\nu_\lambda-1}).
\label{supp_sampling_system}
\end{equation}
If at least $\nu_\lambda$ times are distinct, $V$ has full column rank: $Va=0$ would define a polynomial $\sum_ja_jt^j/j!$ of degree at most $\nu_\lambda-1$ with at least $\nu_\lambda$ roots, forcing $a=0$. Hence the coefficients are unique. This is an exact identifiability statement, not a stability statement. With noise, the uncertainty of the recovered coefficients depends on the smallest singular value of the weighted sampling matrix and can become large for a short or poorly spaced time window even though the coefficients remain unique. Increasing the number of output coordinates does not repair a rank-deficient temporal design when the coefficient vectors are otherwise unrestricted.

More generally, jointly fitted fixed exponents $\mu_a$ and orders $m_a$ give the confluent sampling matrix $H_{\ell,(a,j)}=e^{\mu_at_\ell}t_\ell^j/j!$, and the exact coefficients are identifiable if and only if $H$ has full column rank and its conditioning enters their uncertainty. This statement is conditional on fixed exponents and exponents estimated from the same record require their nonlinear correlations, and any data-dependent model selection, to be covered by the simultaneous coefficient event.

Finally, we point out that sector isolation is necessary for finite-power termination. On $\mathcal{G}_\mu$ with $\mu\neq\lambda$, $(A-\lambda I)=(\mu-\lambda)I+N_\mu$ is invertible, with
\begin{equation}
[(\mu-\lambda)I+N_\mu]^{-1}
=\frac{1}{\mu-\lambda}\sum_{k=0}^{m_\mu-1}
\left(-\frac{N_\mu}{\mu-\lambda}\right)^k.
\end{equation}
No nonzero component outside $\mathcal{G}_\lambda$ can therefore be annihilated by a finite power of $A-\lambda I$. Thus a terminating sequence cannot be assigned to $\lambda$ until spectral leakage has been excluded or bounded. In finite error, any imperfect subtraction of other sectors enters the coefficient radii rather than the exact algebraic proof.

\subsection{Activated depth, readout, and true termination}
\label{supp:exact_activation}
\label{supp:readout}

Let $d_{\rm act}=\min\{r\geq1:q_r=0\}$, and the set is nonempty because $N_\lambda$ is nilpotent. The following result establishes the structure selected by the preparation, without claiming the full Jordan partition of $\mathcal{G}_\lambda$: 

\vspace{1mm}

\begin{prop}\label{prop1}
For every $r\geq1$,
\begin{equation}
d_{\rm act}=r\quad\Longleftrightarrow\quad q_{r-1}\neq0\ \text{and}\ q_r=0.
\label{supp_depth_equivalence}
\end{equation}
Whenever $q_{r-1}\neq0$, the vectors $q_0,\ldots,q_{r-1}$ are independent; if also $q_r=0$, their invariant span carries one size-$r$ Jordan block at $\lambda$.
\end{prop}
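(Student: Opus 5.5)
The plan is to prove the three assertions of Proposition \ref{prop1} in order: the equivalence (\ref{supp_depth_equivalence}), then the linear independence of the Krylov vectors, then the Jordan-block description. All three rest on one elementary fact, which we record first: $q_{j+1}=N_\lambda q_j$. Hence once some $q_j$ vanishes, every later Krylov vector vanishes too. Equivalently, the set $\{r\geq1:q_r=0\}$ is upward closed, and it is nonempty because $q_{\nu_\lambda}=0$.

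For the equivalence (\ref{supp_depth_equivalence}), suppose first that $d_{\rm act}=r$. By the definition of a minimum, $q_r=0$. If $r\geq2$, minimality also excludes $q_{r-1}=0$; if $r=1$, we have $q_0=x_\lambda\neq0$ by hypothesis. For the converse, assume $q_{r-1}\neq0$ and $q_r=0$. Then $r$ belongs to the set. Any $s<r$ with $q_s=0$ would, by upward closure, force $q_{r-1}=0$, which is a contradiction. So $r$ is the minimum, i.e.\ $d_{\rm act}=r$.

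For independence, suppose $q_{r-1}\neq0$ and $\sum_{j=0}^{r-1}c_jq_j=0$ is a nontrivial relation. Let $k$ be the smallest index with $c_k\neq0$, and apply $N_\lambda^{\,r-1-k}$ to the relation. Each term with $j>k$ becomes $c_jq_{j+r-1-k}$, whose index is at least $r$. These terms vanish only if $q_r=0$, which we are not assuming at this stage. I expect this to be the main obstacle, and I would handle it as follows. If the $q_j$ were dependent, then some $q_m$ with $m\leq r-1$ would lie in the span of $q_0,\ldots,q_{m-1}$. Applying $N_\lambda$ repeatedly shows that $\mathcal{K}_m=\operatorname{span}\{q_0,\ldots,q_{m-1}\}$ is $N_\lambda$-invariant. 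The restriction $N_\lambda|_{\mathcal{K}_m}$ is then nilpotent on a space of dimension at most $m$, so its $m$-th power vanishes. But $q_m\in\mathcal{K}_m$, so $q_{r-1}=N_\lambda^{\,r-1-m}q_m$ lies in $\mathcal{K}_m$, and since $r-1\geq m$ we get $q_{r-1}=N_\lambda^{\,r-1}x_\lambda=0$, a contradiction. With this argument the lowest-index trick is no longer needed, although in the case $q_r=0$ it also works directly.

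For the Jordan structure, assume in addition that $q_r=0$. Then $\mathcal{K}_r=\operatorname{span}\{q_0,\ldots,q_{r-1}\}$ is invariant under $N_\lambda$, since $N_\lambda q_{r-1}=q_r=0$, and by the independence just proved it has dimension $r$. In the ordered basis $(q_{r-1},\ldots,q_0)$, the map $N_\lambda$ sends each basis vector to the previous one and $q_{r-1}$ to $0$. Its matrix is therefore the standard nilpotent shift of size $r$. Because $\mathcal{K}_r\subseteq\mathcal{G}_\lambda$ and $A=\lambda I+N_\lambda$ there, $A|_{\mathcal{K}_r}$ is exactly one size-$r$ Jordan block at $\lambda$. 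We claim nothing about the Jordan partition of the rest of $\mathcal{G}_\lambda$.
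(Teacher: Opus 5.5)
Your proof is correct. The equivalence and the Jordan-block step coincide with the paper's, and you are slightly more careful at $r=1$, where $q_0\neq0$ comes from $x_\lambda\neq0$ rather than from minimality.

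The independence step takes a different route. The obstacle you flag arises only for the power $N_\lambda^{r-1-k}$. The paper avoids it by first noting that $q_{r-1}\neq0$ gives $d=d_{\rm act}\geq r$ (by the equivalence just proved). It then applies $N_\lambda^{d-1-\ell}$ to the relation, so every term with $j>\ell$ lands at index at least $d$ and vanishes, leaving $c_\ell q_{d-1}=0$ with $q_{d-1}\neq0$. You instead take a dependence $q_m\in\mathcal{K}_m$ and observe that it makes $\mathcal{K}_m$ an $N_\lambda$-invariant subspace of dimension at most $m$. Nilpotency of the restriction then forces $q_m=N_\lambda^m q_0=0$.

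The paper's version is shorter and reuses the first part of the proposition. Yours separates a fact valid for any operator (the Krylov space becomes invariant at its first dependence) from the single place where nilpotency is used. That is the same mechanism behind the near-EP recurrence $\boldsymbol{q}_2^{(2)}=s\,\boldsymbol{q}_0^{(2)}$ in the paper's closed-form optomechanical analysis.

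One presentational point: your remark that $q_{r-1}$ ``lies in $\mathcal{K}_m$'' is not what does the work. The operative facts are $q_0\in\mathcal{K}_m$ and $(N_\lambda|_{\mathcal{K}_m})^m=0$. The case $m=0$, i.e.\ $q_0=0$, should be excluded directly by $x_\lambda\neq0$.
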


\textbf{Proof.} If $d_{\rm act}=r$, minimality gives $q_{r-1}\neq0$ and $q_r=0$. Conversely, $q_j=0$ for any $j<r$ would imply $q_{r-1}=N_\lambda^{r-1-j}q_j=0$, so the two stated conditions make $r$ the first terminating power. For independence, let us suppose
\begin{equation}
\sum_{j=0}^{r-1}c_jq_j=0
\label{supp_linear_relation}
\end{equation}
and let $\ell$ be the smallest index with $c_\ell\neq0$. Writing $d=d_{\rm act}\geq r$ and applying $N_\lambda^{d-1-\ell}$ leaves $c_\ell q_{d-1}=0$: all terms with $j>\ell$ contain a power at least $d$ and vanish. This contradicts $q_{d-1}\neq0$. Finally, $N_\lambda q_j=q_{j+1}$ and $q_r=0$ make the span invariant under both $N_\lambda$ and $A$. In the reversed basis $(q_{r-1},\ldots,q_0)$, the restriction is the standard nilpotent shift plus $\lambda I$. \hfill$\square$

Consequently,
\begin{equation}
q_{r-1}\neq0\quad\Longleftrightarrow\quad d_{\rm act}\geq r .
\label{supp_depth_lower_bound}
\end{equation}
The number $d_{\rm act}$ is the length of the single cyclic chain reached from the chosen preparation. It may be smaller than $\nu_\lambda$, and neither its value nor termination of that chain determines how many other Jordan blocks occur at $\lambda$. Changing the preparation can therefore change $d_{\rm act}$ without changing the generator.

Defining $Q_r=(q_0,\ldots,q_{r-1})$, $P_r=CQ_r=(p_0,\ldots,p_{r-1})$, and $\mathcal{K}_r=\operatorname{ran}Q_r$, the following is true: 

\vspace{1mm}

\begin{prop}\label{prop2}
$\operatorname{rank}P_r\leq\operatorname{rank}Q_r$, and $p_{r-1}\neq0$ implies $d_{\rm act}\geq r$. If $d_{\rm act}\geq r$, then
\begin{equation}
\operatorname{rank}P_r=r\quad\Longleftrightarrow\quad\ker C\cap\mathcal{K}_r=\{0\}.
\label{supp_rank_injectivity}
\end{equation}
\end{prop}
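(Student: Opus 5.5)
The argument is a short chain of linear-algebra facts built on the factorization $P_r=CQ_r$, combined with Proposition~\ref{prop1}. I would prove the three claims in the order stated.

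\emph{Rank inequality.} Since $P_r=CQ_r$, the column space of $P_r$ equals $C(\operatorname{ran}Q_r)=C(\mathcal{K}_r)$. A linear map cannot raise the dimension of a subspace, so $\operatorname{rank}P_r=\dim C(\mathcal{K}_r)\le\dim\mathcal{K}_r=\operatorname{rank}Q_r$.

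\emph{Nonzero image forces depth.} If $p_{r-1}=Cq_{r-1}\neq0$, then $q_{r-1}\neq0$, because $C$ is linear and $C0=0$. Eq.~(\ref{supp_depth_lower_bound}), which follows from Proposition~\ref{prop1}, then gives $d_{\rm act}\ge r$. No detector assumption enters this step.

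\emph{Rank equivalence.} Assume $d_{\rm act}\ge r$, so that $q_{r-1}\neq0$. Proposition~\ref{prop1} then says $q_0,\ldots,q_{r-1}$ are independent. Hence $Q_r:\mathbb{C}^r\to\mathcal{K}_r$ is a bijection and $\dim\mathcal{K}_r=r$.

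Now apply rank–nullity to the restriction $C|_{\mathcal{K}_r}$:
\[
\operatorname{rank}P_r=\dim C(\mathcal{K}_r)=r-\dim(\ker C\cap\mathcal{K}_r).
\]
Therefore $\operatorname{rank}P_r=r$ holds exactly when $\ker C\cap\mathcal{K}_r=\{0\}$.

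Equivalently, one can argue with coefficient vectors. If $P_rc=0$ with $c\neq0$, then $v=Q_rc$ is nonzero by injectivity of $Q_r$, and $Cv=0$, so $v$ is a nonzero vector in $\ker C\cap\mathcal{K}_r$. Conversely, any nonzero $v$ in that intersection can be written $v=Q_rc$ with $c\neq0$, and then $P_rc=Cv=0$.

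\emph{Main point of care.} There is no real technical obstacle here. The one thing to handle carefully is the role of the hypothesis $d_{\rm act}\ge r$: it is exactly what guarantees $\dim\mathcal{K}_r=r$. Without it, $Q_r$ itself is rank deficient, $\operatorname{rank}P_r=r$ fails automatically, and the equivalence can break, since $\ker C$ may still meet $\mathcal{K}_r$ trivially.
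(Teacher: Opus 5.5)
Your proof is correct and follows the paper's argument: the factorization $P_r=CQ_r$ gives the rank inequality, $p_{r-1}\neq0\Rightarrow q_{r-1}\neq0$ together with Proposition~\ref{prop1} gives $d_{\rm act}\geq r$, and the isomorphism $Q_r:\mathbb{C}^r\to\mathcal{K}_r$ identifies $\ker P_r$ with $\ker C\cap\mathcal{K}_r$. Your rank--nullity identity $\operatorname{rank}P_r=r-\dim(\ker C\cap\mathcal{K}_r)$ is just a slightly more quantitative restatement of that last step, and your remark on why the hypothesis $d_{\rm act}\geq r$ is needed is also correct.
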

\textbf{Proof.} The first statement follows from $P_r=CQ_r$, because a linear map cannot increase the dimension of a column space. The second follows from $p_{r-1}\neq0\Rightarrow q_{r-1}\neq0$ and Eq. (\ref{supp_depth_lower_bound}). Under $d_{\rm act}\geq r$, Proposition \ref{prop1} makes $Q_r:\mathbb{C}^r\to\mathcal{K}_r$ an isomorphism. Therefore $P_ra=0$ is equivalent to $Q_ra\in\ker C\cap\mathcal{K}_r$. Either side of Eq. (\ref{supp_rank_injectivity}) then implies the other. \hfill$\square$

We define $d_{\rm vis}^{(z)}=0$ if every $p_j$ vanishes and otherwise $d_{\rm vis}^{(z)}=1+\max\{j:p_j\neq0\}$, and let $d_{\rm rank}^{(z)}=\max\{r:\operatorname{rank}P_r=r\}$, with the maximum of the empty set equal to zero. Full rank forces $p_{r-1}\neq0$; a nonzero $p_j=Cq_j$ forces $q_j\neq0$; and nilpotency gives $d_{\rm act}\leq\nu_\lambda$. Hence
\begin{equation}
d_{\rm rank}^{(z)}\leq d_{\rm vis}^{(z)}\leq d_{\rm act}\leq\nu_\lambda .
\label{supp_depth_hierarchy}
\end{equation}
Each inequality can be strict. Detector null directions can reduce visible depth below activated depth, and several visible coefficient vectors can be linearly dependent, reducing projected rank further. Conversely, a nonzero scalar coefficient can certify a lower bound on activation even though a scalar instantaneous output can never have projected rank exceeding one. This is why visibility and independent resolution must be reported separately.

A measured zero alone is inconclusive. For example, take $Ne_4=e_3$, $Ne_3=e_2$, $Ne_2=e_1$, $Ne_1=0$, prepare $e_4$, and choose $Ce_4=(1,0)^T$, $Ce_3=(0,1)^T$, $Ce_2=0$, and $Ce_1=(1,0)^T$. Then $p_2=0$ although $q_2\neq0$ and the later coefficient $p_3$ reappears. We therefore fix, before examining terminal coefficients, a nonzero subspace $\mathcal{U}_r^{\rm end}$ satisfying
\begin{equation}
q_{r-1},q_r\in\mathcal{U}_r^{\rm end}
\label{supp_terminal_subspace}
\end{equation}
for every admissible calibration. With fixed norms on $\mathcal{M}$ and $\mathcal{Z}$, we define
\begin{equation}
\underline\alpha_C=\inf_{\substack{v\in\mathcal{U}_r^{\rm end}\\\|v\|_x=1}}\|Cv\|_z,
\quad \quad
\overline\beta_C=\sup_{\substack{v\in\mathcal{U}_r^{\rm end}\\\|v\|_x=1}}\|Cv\|_z .
\label{supp_restricted_gains}
\end{equation}
Compactness of the unit sphere gives
\begin{equation}
\underline\alpha_C>0\quad\Longleftrightarrow\quad\ker C\cap\mathcal{U}_r^{\rm end}=\{0\},
\label{supp_gain_injectivity}
\end{equation}
since a nonzero vector in the intersection produces a zero on the unit sphere, and if the intersection is trivial, the continuous positive function $v\mapsto\|Cv\|_z$ has a positive minimum on that compact sphere.

The containment in Eq. (\ref{supp_terminal_subspace}) is part of the calibration claim: if generator or preparation uncertainty changes the possible endpoint directions, $\mathcal{U}_r^{\rm end}$ must contain their union. Choosing it after viewing the terminal estimate would invalidate the stated coverage. The upper gain is always finite in finite dimension, while the strictly positive lower gain is the additional ingredient that rules out detector blindness.

\vspace{1mm}

\begin{prop}\label{prop3}
If $\underline\alpha_C>0$, then $p_r=0$ if and only if $q_r=0$; hence $p_{r-1}\neq0$ and $p_r=0$ imply $d_{\rm act}=r$. If also $\operatorname{rank}P_r=r$, then $C$ is injective on all of $\mathcal{K}_r$.
\end{prop}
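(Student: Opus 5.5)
The plan is to derive each clause of Proposition \ref{prop3} from the injectivity characterization (\ref{supp_gain_injectivity}), together with Propositions \ref{prop1} and \ref{prop2}. The only inputs are the containment $q_{r-1},q_r\in\mathcal{U}_r^{\rm end}$ from (\ref{supp_terminal_subspace}) and the hypothesis $\underline\alpha_C>0$.

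\textbf{Step 1: $p_r=0$ if and only if $q_r=0$.} One direction is immediate from linearity: if $q_r=0$, then $p_r=Cq_r=0$. For the converse, suppose $q_r\neq0$. Since $q_r\in\mathcal{U}_r^{\rm end}$, normalizing gives $\|p_r\|_z=\|Cq_r\|_z\geq\underline\alpha_C\|q_r\|_x>0$. Equivalently, by (\ref{supp_gain_injectivity}), $q_r$ cannot lie in $\ker C\cap\mathcal{U}_r^{\rm end}=\{0\}$. Hence $p_r=0$ forces $q_r=0$.

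\textbf{Step 2: $d_{\rm act}=r$.} Assume $p_{r-1}\neq0$ and $p_r=0$. Since $p_{r-1}=Cq_{r-1}$, we get $q_{r-1}\neq0$; this is the same observation used in Proposition \ref{prop2}. Step 1 gives $q_r=0$. Proposition \ref{prop1}, Eq. (\ref{supp_depth_equivalence}), then yields $d_{\rm act}=r$.

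\textbf{Step 3: injectivity of $C$ on $\mathcal{K}_r$.} If $\operatorname{rank}P_r=r$, then $p_{r-1}\neq0$, because a full-rank column set has no zero column. By Proposition \ref{prop2} this gives $d_{\rm act}\geq r$. Under that condition, Eq. (\ref{supp_rank_injectivity}) yields $\ker C\cap\mathcal{K}_r=\{0\}$, which is injectivity on $\mathcal{K}_r$. This clause does not even need $\underline\alpha_C>0$. The point worth stating is that $\mathcal{K}_r$ need not lie inside $\mathcal{U}_r^{\rm end}$, so injectivity on all of $\mathcal{K}_r$ genuinely comes from the rank condition rather than from the gain bound.

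The only step requiring care is Step 1. It relies on $q_r$ belonging to the predeclared subspace $\mathcal{U}_r^{\rm end}$; without that containment, the counterexample given just before the proposition, with $Ce_2=0$, shows that a detector-dark $q_r$ can occur. So the main thing to emphasize is that this containment is part of the hypotheses. The remaining steps are direct applications of the earlier propositions.
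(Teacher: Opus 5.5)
Your proof is correct and follows essentially the same route as the paper. Step 1 is the paper's observation that $p_r=0$ places $q_r$ in the trivial intersection $\ker C\cap\mathcal{U}_r^{\rm end}$ (your direct bound $\|Cq_r\|_z\geq\underline\alpha_C\|q_r\|_x$ is an equivalent phrasing), Steps 2 and 3 make the same appeals to Propositions~\ref{prop1} and~\ref{prop2}, and your remark that full rank forces $p_{r-1}\neq0$, hence $d_{\rm act}\geq r$, spells out the hypothesis of Proposition~\ref{prop2} that the paper's ``the last claim is Proposition~\ref{prop2}'' leaves implicit.
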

\textbf{Proof.} The implication $q_r=0\Rightarrow p_r=0$ is immediate. Conversely, $p_r=0$ and Eq. (\ref{supp_terminal_subspace}) place $q_r$ in $\ker C\cap\mathcal{U}_r^{\rm end}$, which is trivial by Eq. (\ref{supp_gain_injectivity}). Moreover, $p_{r-1}\neq0$ implies $q_{r-1}\neq0$, so Proposition \ref{prop1} gives $d_{\rm act}=r$; the last claim is Proposition \ref{prop2}. \hfill$\square$

Thus direct resolution of $r$ reached directions requires at least $r$ independent instantaneous output coordinates or calibrated branches. This dimension count concerns the instantaneous map $C$ and does not exclude dynamic observability of a lower-dimensional time record when a calibrated drift is used. More precisely, Eq. (\ref{supp_rank_injectivity}) requires $\dim\mathcal{Z}\geq r$, while endpoint injectivity requires $\dim\mathcal{Z}\geq\dim\mathcal{U}_r^{\rm end}$. Outputs obtained in separate experimental branches may be stacked into $\mathcal{Z}$ only when their relative gains and coordinate conventions are calibrated, so repeated time samples of one scalar channel cannot by themselves increase the instantaneous projected rank.

\subsection{Finite-error certification}
\label{supp:finite_rank}

On one simultaneous event $\mathcal{E}$, let us assume
\begin{equation}
\|\widehat{p}_j-p_j\|_z\leq\varepsilon_j
\label{supp_coefficient_bounds}
\end{equation}
for every coefficient used by the selected certificate. The radii may combine statistical fitting error with bounded isolation, centering, finite-window, preparation, and coordinate-reconstruction biases. Deterministic uncertainty in the physical mean readout is treated through the gain bounds below and is not counted again in $\varepsilon_j$; no independence among contributions to the joint event is required. Now the reverse triangle inequality gives
\begin{equation}
\|\widehat{p}_j\|_z>\varepsilon_j
\quad\Longrightarrow\quad p_j\neq0
\quad\Longrightarrow\quad q_j\neq0
\quad\Longrightarrow\quad d_{\rm act}\geq j+1 .
\label{supp_nonzero_certificate}
\end{equation}
The implication is one-sided and uses a strict inequality. If $\|\widehat{p}_j\|_z\leq\varepsilon_j$, the coefficient is uncertified: it is neither certified as zero nor as nonzero. The event may be a deterministic uncertainty set, a simultaneous confidence region, or projections of a joint ellipsoid, provided that all coefficients entering the same conclusion are covered together.

Coverage must match the reported statement. A depth claim at one order needs the corresponding coefficient bound; a rank claim needs simultaneous control of every column in $P_r$; and the closure bound needs the predecessor and terminal bounds together with the readout and rate calibration bounds. Separate confidence events may be quoted separately, but their intersection cannot retain the same nominal coverage without an additional joint construction.

Let us denote $\|w\|_z^2=w^\dagger W_zw$ with $W_z>0$, and let $\sigma_{\min}^{\rm col}(M)=\inf_{\|a\|_2=1}\|Ma\|_2$, which is positive exactly when $M$ has full column rank. Upon defining $D_r(\Omega)=\operatorname{diag}(1,\Omega^{-1},\ldots,\Omega^{-(r-1)})$ and $\eta_r(\Omega)^2=\sum_{j=0}^{r-1}\Omega^{-2j}\varepsilon_j^2$, the following holds:

\vspace{1mm}

\begin{prop}\label{prop4}
On $\mathcal{E}$, for every $\Omega>0$,
\begin{equation}
\left\|W_z^{1/2}(\widehat{P}_r-P_r)D_r(\Omega)\right\|_2\leq\eta_r(\Omega).
\label{supp_uniform_scaled_error}
\end{equation}
Consequently, for any $\Omega_{\rm sc}>0$,
\begin{equation}
\sigma_{\min}^{\rm col}\!\left(W_z^{1/2}\widehat{P}_rD_r(\Omega_{\rm sc})\right)>\eta_r(\Omega_{\rm sc})
\quad\Longrightarrow\quad\operatorname{rank}P_r=r .
\label{supp_rank_margin}
\end{equation}
\end{prop}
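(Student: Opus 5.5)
The plan is to bound the spectral norm by the Frobenius norm, and then apply Weyl-type perturbation for the smallest column singular value.

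\emph{Scaled error bound.} Set $E_r=\widehat{P}_r-P_r$, whose $j$th column is $\widehat{p}_j-p_j$ for $j=0,\ldots,r-1$. Right multiplication by $D_r(\Omega)$ scales the $j$th column by $\Omega^{-j}$. Left multiplication by $W_z^{1/2}$ turns the Euclidean norm of each column into $\|\cdot\|_z$, because $\|W_z^{1/2}w\|_2^2=w^\dagger W_zw=\|w\|_z^2$. Hence
\begin{equation}
\bigl\|W_z^{1/2}E_rD_r(\Omega)\bigr\|_2^2\leq\bigl\|W_z^{1/2}E_rD_r(\Omega)\bigr\|_F^2=\sum_{j=0}^{r-1}\Omega^{-2j}\|\widehat{p}_j-p_j\|_z^2 .
\end{equation}
On $\mathcal{E}$, Eq. (\ref{supp_coefficient_bounds}) bounds each term by $\Omega^{-2j}\varepsilon_j^2$, so the right-hand side is at most $\eta_r(\Omega)^2$. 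This holds for every $\Omega>0$ at once, since the event $\mathcal{E}$ does not depend on $\Omega$. That independence is what justifies choosing $\Omega_{\rm sc}$ after reconstruction.

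\emph{Rank margin.} Write $M=W_z^{1/2}P_rD_r(\Omega_{\rm sc})$ and $\widehat{M}=W_z^{1/2}\widehat{P}_rD_r(\Omega_{\rm sc})$. For any unit vector $a$, the triangle inequality gives
\begin{equation}
\|Ma\|_2\geq\|\widehat{M}a\|_2-\|(\widehat{M}-M)a\|_2\geq\sigma^{\rm col}_{\min}(\widehat{M})-\eta_r(\Omega_{\rm sc})>0 .
\end{equation}
Taking the infimum over $a$ shows $\sigma_{\min}^{\rm col}(M)>0$, so $M$ has full column rank $r$. Both $W_z^{1/2}$ and $D_r(\Omega_{\rm sc})$ are invertible because $W_z>0$ and $\Omega_{\rm sc}>0$. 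Therefore $\operatorname{rank}P_r=\operatorname{rank}M=r$.

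\emph{Main obstacle.} There is no real difficulty. The one point needing care is that $W_z^{1/2}$ must be the positive Hermitian square root, so that it is invertible and converts $z$-norms into $2$-norms exactly. Everything else follows from the elementary inequality $\|\cdot\|_2\leq\|\cdot\|_F$ and the simultaneity of the coefficient event $\mathcal{E}$.
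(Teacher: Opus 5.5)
Your proof is correct and follows essentially the same route as the paper. Both bound the weighted, column-scaled error by its Frobenius norm, then use the reverse triangle inequality to lower-bound $\sigma_{\min}^{\rm col}$, and finish with the invertibility of $W_z^{1/2}$ and $D_r(\Omega_{\rm sc})$; your remark that $\mathcal{E}$ does not depend on $\Omega$ matches the paper's justification for choosing $\Omega_{\rm sc}$ after reconstruction.
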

\textbf{Proof.} We set $E_r=\widehat{P}_r-P_r$, so Eq. (\ref{supp_coefficient_bounds}) yields
\begin{equation}
\|W_z^{1/2}E_rD_r(\Omega)\|_2^2
\leq\|W_z^{1/2}E_rD_r(\Omega)\|_F^2
=\sum_{j=0}^{r-1}\Omega^{-2j}\|\widehat{p}_j-p_j\|_z^2
\leq\eta_r(\Omega)^2 .
\end{equation}
For $\widehat{M}=W_z^{1/2}\widehat{P}_rD_r$ and $M=W_z^{1/2}P_rD_r$, the reverse triangle inequality, followed by an infimum over unit $a$, gives $\sigma_{\min}^{\rm col}(M)\geq\sigma_{\min}^{\rm col}(\widehat{M})-\|\widehat{M}-M\|_2$. Eq. (\ref{supp_rank_margin}) therefore makes $M$ full column rank. Since $W_z^{1/2}$ and $D_r$ are invertible, $P_r$ also has rank $r$. \hfill$\square$

The error bound is simultaneous for every $\Omega>0$ on the same event, so $\Omega_{\rm sc}$ may be optimized after reconstruction if the fitted coefficients, metric, and radii are held fixed. By contrast, data-dependent selection of the sector, polynomial order, endpoint, sampling subset, metric, regularization, or fit must be covered simultaneously or performed on independent data.

Notably, the diagonal scaling only compares columns with different physical units, so it neither changes their span nor supplies dynamical information. The rank certificate is sufficient rather than necessary: failure of the singular-value margin leaves the true rank unresolved. It also certifies the physical matrix $P_r$, not merely the rank of the noisy estimate $\widehat{P}_r$.

\subsection{Readout calibration and termination within tolerance}
\label{supp:readout_transfer}

For every $v\in\mathcal{U}_r^{\rm end}$, the definitions imply
\begin{equation}
\underline\alpha_C\|v\|_x\leq\|Cv\|_z\leq\overline\beta_C\|v\|_x,
\label{supp_gain_inequality}
\end{equation}
and hence
\begin{equation}
\|q_r\|_x\leq\frac{\|p_r\|_z}{\underline\alpha_C},
\quad \quad
\|q_{r-1}\|_x\geq\frac{\|p_{r-1}\|_z}{\overline\beta_C}.
\label{supp_state_output_transfer}
\end{equation}
Let the nominal map be $\widetilde{C}=C+\Delta C$, with
\begin{equation}
\|\Delta C\|_{x\to z;\mathcal{U}_r^{\rm end}}\leq\delta_C^{\rm end}.
\label{supp_readout_error_bound}
\end{equation}
For any unit $v\in\mathcal{U}_r^{\rm end}$, $\|Cv\|_z\geq\|\widetilde{C}v\|_z-\|\Delta Cv\|_z$ and $\|Cv\|_z\leq\|\widetilde{C}v\|_z+\|\Delta Cv\|_z$. Taking the infimum and supremum gives
\begin{equation}
\underline\alpha_C\geq\underline\alpha_{\widetilde{C}}-\delta_C^{\rm end}.
\label{supp_lower_gain_bound}
\end{equation}
Likewise,
\begin{equation}
\overline\beta_C\leq\overline\beta_{\widetilde{C}}+\delta_C^{\rm end}.
\label{supp_upper_gain_bound}
\end{equation}
When $\underline\alpha_{\widetilde{C}}>\delta_C^{\rm end}$, let us define
\begin{equation}
\underline\alpha_C\geq\underline\alpha_{\widetilde{C}}-\delta_C^{\rm end}
\equiv\underline\alpha_C^{\rm L}>0,
\quad \quad
\overline\beta_C\leq\overline\beta_{\widetilde{C}}+\delta_C^{\rm end}
\equiv\overline\beta_C^{\rm U}.
\label{supp_conservative_gains}
\end{equation}
The first bound remains strictly positive uniformly over the declared calibration class, so every admissible physical readout is injective on the endpoint subspace. If it is nonpositive, a dark endpoint remains allowed and no state-level upper bound can be obtained from the measured terminal coefficient alone.

Now let
\begin{equation}
\Omega_*=\|N_\lambda\|_{x\to x;\mathcal{G}_\lambda}>0.
\label{supp_krylov_rate}
\end{equation}
For $q_{r-1}\neq0$, we define
\begin{equation}
\mathcal{R}_r^{(x)}=\frac{\|q_r\|_x}{\Omega_*\|q_{r-1}\|_x}.
\label{supp_closure_residual}
\end{equation}
Since $q_r=N_\lambda q_{r-1}$, one must have $0\leq\mathcal{R}_r^{(x)}\leq1$. Independent calibration must provide
\begin{equation}
0<\Omega_K^{\rm L}\leq\Omega_*.
\label{supp_krylov_rate_lower}
\end{equation}
For example, $\|\widetilde{N}_\lambda-N_\lambda\|_{x\to x}\leq\delta_N$ permits $\Omega_K^{\rm L}=\|\widetilde{N}_\lambda\|_{x\to x}-\delta_N>0$ by the reverse triangle inequality. This is a lower bound on the physical shifted-generator norm, not an assertion that an arbitrary perturbation preserves the EP Jordan structure.

The normalization by $\Omega_*$ makes $\mathcal{R}_r^{(x)}$ dimensionless and bounded by unity. It compares the reopening at level $r$ with the largest calibrated action of the shifted generator in the same norm. So it is not a probability and should not be compared across different physical metrics without transporting the metric. The following result is true:

\vspace{1mm}

\begin{prop}\label{prop5}
If $\|\widehat{p}_{r-1}\|_z>\varepsilon_{r-1}$, then
\begin{equation}
\mathcal{R}_r^{(x)}\leq
\frac{\overline\beta_C^{\rm U}}{\underline\alpha_C^{\rm L}}
\frac{\|\widehat{p}_r\|_z+\varepsilon_r}
{\Omega_K^{\rm L}(\|\widehat{p}_{r-1}\|_z-\varepsilon_{r-1})}.
\label{supp_closure_bound}
\end{equation}
\end{prop}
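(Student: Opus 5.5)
The plan is to chain the elementary inequalities already established in this subsection, working on the simultaneous event $\mathcal{E}$ and under the calibration class fixed above. I would bound the numerator $\|q_r\|_x$ from above and the denominator $\Omega_*\|q_{r-1}\|_x$ from below, each by measured quantities. The hypothesis $\|\widehat{p}_{r-1}\|_z>\varepsilon_{r-1}$ enters in two places: it keeps the lower bound on the denominator strictly positive, and, through Eq. (\ref{supp_nonzero_certificate}), it gives $p_{r-1}\neq0$. Hence $q_{r-1}\neq0$ and $\mathcal{R}_r^{(x)}$ is well defined.

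\emph{Output bounds.} By the triangle inequality and Eq. (\ref{supp_coefficient_bounds}), $\|p_r\|_z\leq\|\widehat{p}_r\|_z+\varepsilon_r$. By the reverse triangle inequality, $\|p_{r-1}\|_z\geq\|\widehat{p}_{r-1}\|_z-\varepsilon_{r-1}>0$.

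\emph{Transfer to state space.} Since $q_{r-1},q_r\in\mathcal{U}_r^{\rm end}$ by Eq. (\ref{supp_terminal_subspace}), the gain inequality Eq. (\ref{supp_gain_inequality}) applies, giving Eq. (\ref{supp_state_output_transfer}). I then replace the physical gains with their conservative calibrated versions from Eq. (\ref{supp_conservative_gains}): $\underline\alpha_C\geq\underline\alpha_C^{\rm L}>0$ and $\overline\beta_C\leq\overline\beta_C^{\rm U}$. This yields
\[
\|q_r\|_x\leq\frac{\|\widehat{p}_r\|_z+\varepsilon_r}{\underline\alpha_C^{\rm L}},
\qquad
\|q_{r-1}\|_x\geq\frac{\|\widehat{p}_{r-1}\|_z-\varepsilon_{r-1}}{\overline\beta_C^{\rm U}}.
\]
The definition of $\underline\alpha_C^{\rm L}$ presupposes $\underline\alpha_{\widetilde C}>\delta_C^{\rm end}$. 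I would note this explicitly as the standing assumption under which the bound is stated.

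\emph{Combining.} Using $\Omega_*\geq\Omega_K^{\rm L}>0$ from Eq. (\ref{supp_krylov_rate_lower}), the denominator satisfies $\Omega_*\|q_{r-1}\|_x\geq\Omega_K^{\rm L}(\|\widehat{p}_{r-1}\|_z-\varepsilon_{r-1})/\overline\beta_C^{\rm U}>0$. Dividing the numerator bound by this lower bound gives exactly Eq. (\ref{supp_closure_bound}). Every quantity in the manipulation is positive, so the direction of each inequality is preserved.

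There is no real obstacle here. The only point needing care is bookkeeping: all the ingredients must hold on the same event. The coefficient bounds for both $p_{r-1}$ and $p_r$, the readout bound Eq. (\ref{supp_readout_error_bound}), and the rate bound $\Omega_K^{\rm L}$ must all be valid simultaneously. The endpoint containment must also hold for the physical $q_{r-1}$ and $q_r$, which is guaranteed because $\mathcal{U}_r^{\rm end}$ was fixed in advance to cover every admissible calibration.
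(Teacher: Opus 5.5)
Your proposal is correct and follows essentially the same route as the paper. The paper also gets $q_{r-1}\neq0$ from the nonzero certificate, passes to output space using the restricted gain inequalities, the conservative calibrated gains and the rate bound $\Omega_K^{\rm L}$, and then uses the simultaneous coefficient radii to bound $\|p_r\|_z$ above and $\|p_{r-1}\|_z$ below. The only difference is that you substitute the coefficient bounds before the state-space transfer rather than after, which is immaterial.
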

\textbf{Proof.} The predecessor condition and Eq. (\ref{supp_nonzero_certificate}) make $q_{r-1}\neq0$. Eqs. (\ref{supp_state_output_transfer}) and (\ref{supp_conservative_gains}), followed by Eq. (\ref{supp_krylov_rate_lower}), give
\begin{equation}
\mathcal{R}_r^{(x)}
\leq\frac{\overline\beta_C^{\rm U}}{\underline\alpha_C^{\rm L}}
\frac{\|p_r\|_z}{\Omega_K^{\rm L}\|p_{r-1}\|_z}.
\label{supp_intermediate_closure}
\end{equation}
Finally, Eq. (\ref{supp_coefficient_bounds}) gives $\|p_r\|_z\leq\|\widehat{p}_r\|_z+\varepsilon_r$ and $\|p_{r-1}\|_z\geq\|\widehat{p}_{r-1}\|_z-\varepsilon_{r-1}>0$. Substitution proves Eq. (\ref{supp_closure_bound}). \hfill$\square$

For a tolerance $\tau_{\rm term}$ fixed before inspecting the terminal coefficient, a right-hand side not exceeding $\tau_{\rm term}$ certifies termination within tolerance. A positive-width uncertainty set generally cannot establish the exact equality $q_r=0$, and failure of the inequality does not establish nontermination. The statement is preparation-specific and refers to the declared moment norm.

The strict predecessor test is essential, as without it, the denominator can vanish and the normalized residual is not operationally bounded. Conversely, a terminal coefficient certified as nonzero proves nontermination at that level through Eq. (\ref{supp_nonzero_certificate}), so an uncertified terminal coefficient is used only through its upper error bar in Eq. (\ref{supp_closure_bound}).

The result does not depend on the coordinates used to represent that norm. For $\|v\|_x^2=v^\dagger G_xv$ and an invertible change of variables $x'=Sx$, let us set
\begin{equation}
N_\lambda'=SN_\lambda S^{-1},\quad C'=CS^{-1},\quad
\mathcal{U}_r^{{\rm end}\prime}=S\mathcal{U}_r^{\rm end},\quad
G_x'=S^{-\dagger}G_xS^{-1}.
\label{supp_coordinate_transform}
\end{equation}
Then $q_j'=Sq_j$, $\|Sv\|_{x'}=\|v\|_x$, and $C'Sv=Cv$, so
\begin{equation}
\underline\alpha_{C'}=\underline\alpha_C,\quad \quad
\overline\beta_{C'}=\overline\beta_C,\quad \quad
\Omega_*'=\Omega_*,\quad \quad
\mathcal{R}_r^{(x')}=\mathcal{R}_r^{(x)}.
\label{supp_coordinate_invariance}
\end{equation}
Nominal generator and readout errors, and therefore their certified bounds, must be transported by the same change of coordinates. This proves coordinate invariance for a fixed physical norm. It should, however, be noted that replacing the metric itself by a different weighting changes the scientific definition of the residual and is not a mere coordinate transformation.

\subsection{Spectral-centering error}
\label{supp:shift_error}

If the analysis uses $\widehat{\lambda}=\lambda+\Delta\lambda$, its coefficients obey the exact triangular relation
\begin{equation}
q_j^{(\widehat\lambda)}
=(N_\lambda-\Delta\lambda I)^jx_\lambda
=\sum_{s=0}^j\binom{j}{s}(-\Delta\lambda)^{j-s}q_s .
\label{supp_shift_transform}
\end{equation}
This follows directly from the binomial theorem because $N_\lambda$ commutes with the identity. Equivalently, one has
\begin{equation}
e^{-\widehat\lambda t}e^{At}x_\lambda
=e^{-\Delta\lambda t}\sum_{s=0}^{\nu_\lambda-1}\frac{t^s}{s!}q_s .
\label{supp_miscentred_trajectory}
\end{equation}
For $\Delta\lambda\neq0$, the extra exponential has nonzero Taylor coefficients at all orders. A finite-window polynomial fit need not return the exact Taylor coefficients in Eq. (\ref{supp_shift_transform}), and its bias depends on the sampling times, truncation order, weighting, and reconstruction method, and must enter the simultaneous coefficient bounds. If the correctly centered sequence terminates at $r$, setting $j=r$ gives
\begin{equation}
q_r^{(\widehat\lambda)}=-r\Delta\lambda q_{r-1}+O(|\Delta\lambda|^2),
\quad \quad
p_r^{(\widehat\lambda)}=-r\Delta\lambda p_{r-1}+O(|\Delta\lambda|^2),
\label{supp_false_terminal}
\end{equation}
because the $s=r-1$ term is linear in $\Delta\lambda$ and every remaining term is at least quadratic. Thus centering uncertainty and finite-window bias must be propagated into every coefficient used by the nonzero, rank, or termination certificate. In particular, the closure bound needs simultaneous validity for both the predecessor and terminal coefficients.

\subsection{Deterministic stability near an exceptional point}
\label{supp:near_ep_extension}

We now extend the exact-EP construction to a nearby calibrated generator, without requiring a near-EP fit or numerical parameter scan. We assume that a common contour remains in the resolvent and encloses an isolated cluster of constant algebraic dimension throughout the calibrated neighborhood. For a sufficiently small neighborhood,
$T_{\boldsymbol{\theta}}
=\Pi_{\boldsymbol{\theta}}|_{\mathcal{G}_\lambda}:
\mathcal{G}_\lambda\to\mathcal{G}_{\boldsymbol{\theta}}$
is invertible. With $m=\dim\mathcal{G}_\lambda$, we define
\begin{equation}
\begin{gathered}
\widetilde{A}(\boldsymbol{\theta})
=
T_{\boldsymbol{\theta}}^{-1}
A(\boldsymbol{\theta})|_{\mathcal{G}_{\boldsymbol{\theta}}}
T_{\boldsymbol{\theta}},
\quad \quad
\lambda_{\rm c}(\boldsymbol{\theta})
=
\frac{1}{m}\operatorname{tr}\widetilde{A}(\boldsymbol{\theta}),\\
B(\boldsymbol{\theta})
=
\widetilde{A}(\boldsymbol{\theta})
-\lambda_{\rm c}(\boldsymbol{\theta})I,
\quad \quad
N=B(\boldsymbol{\theta}_{\rm EP}).
\end{gathered}
\label{supp_near_ep_centering}
\end{equation}
The same map transports preparations, and transport and centering uncertainty is included in the bounds below.

Let $N^rx_0=0$, $a_{r-1}=\|N^{r-1}x_0\|>0$, and, at the tested nearby point, we write $B=N+E$ and $x=x_0+\eta$, where $\|E\|\leq\delta_A$ and $\|\eta\|\leq\delta_x$. We define $\Omega=\|N\|>0$, $X=\|x_0\|$ and $\mathcal{R}_r(B,x)=\|B^rx\|/[\|B\|\|B^{r-1}x\|]$, so that
\begin{equation}
\begin{gathered}
D_j=\big[(\Omega+\delta_A)^j-\Omega^j\big]X
     +(\Omega+\delta_A)^j\delta_x,\\
\|B^jx-N^jx_0\|\leq D_j,\quad \quad
\mathcal{R}_r(B,x)\leq
\frac{D_r}{(\Omega-\delta_A)(a_{r-1}-D_{r-1})},
\end{gathered}
\label{supp_near_ep_analytic_bounds}
\end{equation}
provided $\delta_A<\Omega$ and $D_{r-1}<a_{r-1}$. Indeed, the noncommutative identity $B^j-N^j=\sum_{k=0}^{j-1}B^{j-1-k}EN^k$ gives $\|B^j-N^j\|\leq(\Omega+\delta_A)^j-\Omega^j$. Applying this to $x_0$, adding the preparation error, and using the reverse triangle inequality proves Eq. (\ref{supp_near_ep_analytic_bounds}). Consequently, $D_j<\|N^jx_0\|$ preserves the $j$th nonzero direction, while the second inequality bounds the reopening of an exact endpoint.

The same control preserves the number of independent Krylov directions. To show this, we define $\mathsf{Q}_r^{(0)}=(\Omega^{-j}N^jx_0)_{j=0}^{r-1}$ and $\mathsf{Q}_r=(\Omega^{-j}B^jx)_{j=0}^{r-1}$, and let $\sigma_{\min}^{\rm col}(\mathsf{Q})=\inf_{\|a\|_2=1}\|\mathsf{Q}a\|$, using the same state norm as in the perturbation bounds. Since the $j$th column of $\mathsf{Q}_r-\mathsf{Q}_r^{(0)}$ has norm at most $D_j/\Omega^j$, the Cauchy-Schwarz and reverse triangle inequalities give
\begin{equation}
\sigma_{\min}^{\rm col}(\mathsf{Q}_r)
\geq
\sigma_{\min}^{\rm col}(\mathsf{Q}_r^{(0)})
-\left[\sum_{j=0}^{r-1}\frac{D_j^2}{\Omega^{2j}}\right]^{1/2}.
\label{supp_near_ep_rank_stability}
\end{equation}
A positive right-hand side guarantees $r$ independent physical Krylov directions. It should be noted that this is a statement about the dimension reached by the preparation, and not a Jordan-depth assignment away from coalescence.

Finally, a residual bounds parameter distance only along perturbations detected by the preparation. If $B(h)=N+hV+O(h^2)$ and $x(h)=x_0+hx_1+O(h^2)$, differentiation gives
\begin{equation}
\mathcal{R}_r(h)=
\frac{|h|}{\Omega a_{r-1}}
\left\|
\sum_{k=0}^{r-1}N^{r-1-k}VN^kx_0+N^rx_1
\right\|
+O(h^2).
\label{supp_near_ep_transversality}
\end{equation}
A nonzero leading vector therefore gives local upper and lower bounds proportional to $|h|$. For several parameters $\boldsymbol{\theta}$, we define $F(\boldsymbol{\theta})=B(\boldsymbol{\theta})^r x(\boldsymbol{\theta})$ and $\mathcal{E}_r=F^{-1}(0)$. Let us assume that $B$ and $x$ are $C^2$, that $\mathcal{E}_r$ is an embedded $C^2$ submanifold admitting a tubular neighborhood $U$, and that constants $0<d_-\leq d_+<\infty$ and $g>0$ exist such that
\begin{equation}
d_-\leq\|B(\boldsymbol{\theta})\|\|B(\boldsymbol{\theta})^{r-1}x(\boldsymbol{\theta})\|\leq d_+    
\end{equation}
throughout $U$, while $\|DF(\boldsymbol{\vartheta})n\|\geq g\|n\|$ for every $\boldsymbol{\vartheta}\in\mathcal{E}_r\cap U$ and every normal vector $n$ to $\mathcal{E}_r$ at $\boldsymbol{\vartheta}$. After shrinking $U$ so that the relevant derivatives are bounded, Taylor's theorem gives constants $0<c\leq C<\infty$ such that
\begin{equation}
c\operatorname{dist}(\boldsymbol{\theta},\mathcal{E}_r)
\leq\mathcal{R}_r(\boldsymbol{\theta})
\leq C\operatorname{dist}(\boldsymbol{\theta},\mathcal{E}_r)    
\end{equation}
throughout $U$. If either uniform condition fails, no uniform first-order distance bound follows, and higher-order terms must be examined separately.

\section{Optomechanical model and finite-imprecision certification}\label{supp:optomech_realization}

This section derives the closed moment model used in the numerical illustration, specifies how signed preparations are implemented by physical states, and records the complete measurement and confidence model. The simulation is an exact-EP consistency test of the certification procedure, so model departures that are not injected into the simulated data are stated explicitly as experimental caveats.

\subsection{Coherence-sector dynamics and preparations}

At the resonant red sideband, the rotating-wave Langevin equations are \cite{Aspelmeyer_2014,Ghosh_2026}
\begin{equation}
\dot a=-\frac{\kappa}{2}a-iGb+\sqrt{\kappa}a_{\rm in},
\quad \quad
\dot b=-\frac{\gamma_{\rm m}}{2}b-iGa+\sqrt{\gamma_{\rm m}}b_{\rm in}.
\label{supp_optomech_langevin}
\end{equation}
The first moments evolve under
\begin{equation}
K=\begin{pmatrix}-\kappa/2&-iG\\-iG&-\gamma_{\rm m}/2\end{pmatrix},
\quad \quad
\lambda_\pm=-\frac{\kappa+\gamma_{\rm m}}4
\pm\sqrt{\frac{(\kappa-\gamma_{\rm m})^2}{16}-G^2}.
\label{supp_optomech_first_moment}
\end{equation}
For $\kappa>\gamma_{\rm m}$ they coalesce at $G_{\rm EP}=(\kappa-\gamma_{\rm m})/4$. Writing $K=-\bar\Gamma I/2+N_1$, with $\bar\Gamma=(\kappa+\gamma_{\rm m})/2$ and $N_1=G_{\rm EP}\bigl(\begin{smallmatrix}-1&-i\\-i&1\end{smallmatrix}\bigr)$, gives $N_1^2=0$ and $N_1\neq0$.

For the normally ordered coherence matrix $S_{ij}=\langle v_i^\dagger v_j\rangle$, $\boldsymbol{v}=(a,b)^T$, independent phase-insensitive Markovian baths give the Lyapunov dynamics
\begin{equation}
\dot S=K^\dagger S+SK+D,\quad \quad
D=\operatorname{diag}(\kappa\bar n_{\rm c},\gamma_{\rm m}\bar n_{\rm m}),
\quad \quad K^\dagger S_{\rm ss}+S_{\rm ss}K+D=0.
\label{supp_optomech_lyapunov}
\end{equation}
After subtracting the stationary solution, $X=S-S_{\rm ss}$ obeys $\dot X=-\bar\Gamma X+N_1^\dagger X+XN_1$. The shifted map $\mathcal{N}_{\rm coh}(X)=N_1^\dagger X+XN_1$ satisfies
\begin{equation}
\mathcal{N}_{\rm coh}^2(X)=2N_1^\dagger XN_1,\quad \quad
\mathcal{N}_{\rm coh}^3=0,
\label{supp_optomech_lift_nilpotency}
\end{equation}
and its square is nonzero, for example on $X=\operatorname{diag}(1,0)$. In the invariant coordinates $\boldsymbol{x}=(\delta n_{\rm c},\delta n_{\rm m},\delta Y_{\rm cm})^T$, where $\delta Y_{\rm cm}=2{\rm Im}[\delta\langle a^\dagger b\rangle]$, one gets
\begin{equation}
\dot{\boldsymbol{x}}=(-\bar\Gamma I+N_{\rm coh})\boldsymbol{x},
\quad \quad
N_{\rm coh}=G_{\rm EP}
\begin{pmatrix}-2&0&1\\0&2&-1\\-2&2&0\end{pmatrix},
\quad \quad N_{\rm coh}^3=0\neq N_{\rm coh}^2.
\label{supp_optomech_Ncoh}
\end{equation}
The lifted index three follows without diagonalizing the moment drift: each action of $\mathcal{N}_{\rm coh}$ inserts $N_1$ on the left or right, and any third action contains either $N_1^2$ or $(N_1^\dagger)^2$. The displayed nonzero square excludes index one or two. Thus an EP2 in the amplitude drift produces an available length-three chain in this bilinear sector.

The complementary coordinate $2{\rm Re}[\delta\langle a^\dagger b\rangle]$ decays independently at $-\bar\Gamma$, giving the $3\oplus1$ partition. Both the blocks share the same eigenvalue, so the three-coordinate sector is selected by its independently validated invariance, not by spectral separation. This decomposition is exact within the resonant rotating-wave model. Residual detuning couples the complementary coordinate back into the retained sector, while counterrotating terms couple number-conserving bilinears to pairing moments. An experiment must enlarge the calibrated sector or include these effects as coefficient bias as neither is present in the simulation.

We use
\begin{equation}
\|\boldsymbol{v}\|_{\rm coh}^2=\boldsymbol{v}^\dagger M_{\rm coh}\boldsymbol{v}
=2|v_{\rm c}|^2+2|v_{\rm m}|^2+|v_Y|^2,\quad \quad
M_{\rm coh}=\operatorname{diag}(2,2,1).
\label{supp_optomech_metric}
\end{equation}
For $X(\boldsymbol{v})=\bigl(\begin{smallmatrix}v_{\rm c}&iv_Y/2\\-iv_Y/2&v_{\rm m}\end{smallmatrix}\bigr)$, this is $\|\boldsymbol{v}\|_{\rm coh}=\sqrt2\|X(\boldsymbol{v})\|_{\rm F}$. Its induced rate is
\begin{equation}
\Omega_*=\|N_{\rm coh}\|_{\rm coh\to coh}
=\frac{\kappa-\gamma_{\rm m}}{\sqrt2}=2\sqrt2G_{\rm EP}.
\label{supp_optomech_rate}
\end{equation}
With $d=(\kappa-\gamma_{\rm m})/2$, the cavity-noise preparation generates
\begin{equation}
\boldsymbol{q}_0^{(3)}=n_{\rm p}(1,0,0)^T,\quad
\boldsymbol{q}_1^{(3)}=-dn_{\rm p}(1,0,1)^T,\quad
\boldsymbol{q}_2^{(3)}=\frac{d^2n_{\rm p}}2(1,1,2)^T,\quad
\boldsymbol{q}_3^{(3)}=0,
\label{supp_optomech_depth3_chain}
\end{equation}
whereas the differential preparation gives
\begin{equation}
\boldsymbol{q}_0^{(2)}=n_{\rm p}(1,-1,0)^T,\quad
\boldsymbol{q}_1^{(2)}=-dn_{\rm p}(1,1,2)^T,\quad
\boldsymbol{q}_2^{(2)}=0.
\label{supp_optomech_depth2_chain}
\end{equation}
The first chain has rank three and the second rank two. The signed differential preparation is implemented by subtracting two physical states in the manner
\begin{equation}
S_A(0)=S_{\rm ss}+n_{\rm p}\operatorname{diag}(1,0),\quad \quad
S_B(0)=S_{\rm ss}+n_{\rm p}\operatorname{diag}(0,1),
\label{supp_optomech_depth2_physical_pair}
\end{equation}
rather than by assigning a negative occupation. Both normal covariance matrices are positive semidefinite. Their anomalous covariances vanish and remain zero under the passive red-sideband dynamics, so positive semidefiniteness of $S$ is sufficient for physicality within this gauge-invariant Gaussian family. Their common diffusion cancels from
\begin{equation}
\frac{d}{dt}(S_A-S_B)=K^\dagger(S_A-S_B)+(S_A-S_B)K,
\label{supp_optomech_differential_homogeneous}
\end{equation}
while independent measurement covariances add. The code checks positivity of both total trajectories at every acquisition time. Nilpotency gives
\begin{equation}
\boldsymbol{x}(t)=e^{-\bar\Gamma t}
\left(\boldsymbol{q}_0+t\boldsymbol{q}_1+\frac{t^2}{2}\boldsymbol{q}_2\right).
\label{supp_optomech_polynomial_response}
\end{equation}
The chains in Eqs. (\ref{supp_optomech_depth3_chain}) and (\ref{supp_optomech_depth2_chain}) are obtained by direct multiplication with Eq. (\ref{supp_optomech_Ncoh}). Their nonzero prefixes are independent, so they establish activated depths three and two in the exact model. The solid curves fitted to the simulated records use the same polynomial form, but the terminal order is included as a free coefficient and is never imposed from these identities.

\subsection{Closed-form optomechanical stability near the exceptional point}
\label{supp:optomech_near_ep}

The optomechanical coherence sector permits a sharper result than the general perturbation bound within the resonant red-sideband rotating-wave family. Taking $G>0$ by phase convention, let $d=(\kappa-\gamma_{\rm m})/2>0$ and $s=d^2-4G^2$. The shifted coherence drift then obeys
\begin{equation}
B_{\rm coh}=
\begin{pmatrix}
-d&0&G\\
0&d&-G\\
-2G&2G&0
\end{pmatrix},
\quad \quad
B_{\rm coh}^3=sB_{\rm coh}, \quad \quad
\|B_{\rm coh}\|_{\rm coh\to coh}
=\sqrt{d^2+4G^2}.
\label{supp_near_ep_optomech_algebra}
\end{equation}
The cubic identity gives the exact propagator
\begin{equation}
\begin{aligned}
e^{B_{\rm coh}t}
&=I+S_s(t)B_{\rm coh}+F_s(t)B_{\rm coh}^2,\\
S_s(t)&=\sum_{n\geq0}\frac{s^nt^{2n+1}}{(2n+1)!},
\quad \quad
F_s(t)=\sum_{n\geq0}\frac{s^nt^{2n+2}}{(2n+2)!}.
\end{aligned}
\label{supp_near_ep_propagator}
\end{equation}
At the EP, $s=0$, so $S_0(t)=t$ and $F_0(t)=t^2/2$, recovering the terminating Jordan polynomial.

For the differential preparation, let $q_0^{(2)}=n_{\rm p}(1,-1,0)^T$ and $q_j^{(2)}=B_{\rm coh}^jq_0^{(2)}$. Direct multiplication gives $q_1^{(2)}=-n_{\rm p}(d,d,4G)^T$ and $q_2^{(2)}=s q_0^{(2)}$. The first two vectors remain independent, while the exact endpoint becomes a recurrence within their two-dimensional cyclic space. Using the amplitude eigenvalues $\lambda_\pm=-\bar\Gamma/2\pm\sqrt{s}/2$ gives
\begin{equation}
\begin{gathered}
\mathcal{R}_2^{(2)}
=\frac{|d^2-4G^2|}{d^2+4G^2}
=\frac{|\lambda_+-\lambda_-|^2}{d^2+4G^2}
=\frac{|1-u^2|}{1+u^2}
=\tanh|\log u|,
\quad \quad
u=\frac{4G}{\kappa-\gamma_{\rm m}},\\
0\leq b<1,\quad \quad
\mathcal{R}_2^{(2)}\leq b
\quad\Longleftrightarrow\quad
\sqrt{\frac{1-b}{1+b}}
\leq\frac{4G}{\kappa-\gamma_{\rm m}}
\leq\sqrt{\frac{1+b}{1-b}}.
\end{gathered}
\label{supp_near_ep_optomech_residual}
\end{equation}
Thus the endpoint reopening equals the squared spectral splitting normalized by $\|B_{\rm coh}\|_{\rm coh\to coh}^2$, and a prescribed residual tolerance gives an explicit interval for the coupling. For the cavity-noise preparation, $\det(q_0^{(3)},q_1^{(3)},q_2^{(3)})=4n_{\rm p}^3G^3$, so its cyclic dimension remains three for every $G\neq0$. The same relations apply to the Bose-Einstein-condensate specialization discussed subsequently, after the identification $\gamma_{\rm m}\mapsto\gamma_{\rm B}$, subject to its stated two-mode and rotating-wave assumptions.

\subsection{Calibrated multibranch readout and noise model}

Independent repetitions reconstruct the two occupations and the coherence. Opposite phases of a calibrated $50{:}50$ mixing pulse yield
\begin{equation}
n_-=(n_{\rm c}+n_{\rm m}+Y_{\rm cm})/2,
\quad \quad n_+=(n_{\rm c}+n_{\rm m}-Y_{\rm cm})/2,
\quad \quad Y_{\rm cm}=n_--n_+.
\label{supp_optomech_correlation_branches}
\end{equation}
The code uses these ideal calibrated mean relations rather than the internal mixing-pulse dynamics; loss and added noise enter through effective gains and floors. Combining branches gives the nominal map $\widetilde{C}=I_3$, which denotes reconstruction in common coordinates, not simultaneous measurement. We choose
\begin{equation}
\mathcal{U}_2^{\rm end}=\mathcal{U}_3^{\rm end}=\mathcal{M}_{\rm coh}=\mathbb{C}^3,
\label{supp_optomech_terminal_sector}
\end{equation}
which contains both endpoint pairs. More explicitly, the effective mean model is
\begin{equation}
\widehat{\boldsymbol{z}}=C\boldsymbol{x}+\boldsymbol\xi,\quad \quad
C=I_3-\Delta C,\quad \quad \mathbb{E}\boldsymbol\xi=0,\quad \quad
\|\Delta C\|_{\rm coh\to coh}\leq\delta_{\rm ro}<1.
\label{supp_optomech_readout_decomposition}
\end{equation}
Thus
\begin{equation}
\underline\alpha_C^{\rm L}=1-\delta_{\rm ro},
\quad \quad \overline\beta_C^{\rm U}=1+\delta_{\rm ro},
\quad \quad \mathcal{C}_{\rm ro}=\frac{1+\delta_{\rm ro}}{1-\delta_{\rm ro}}.
\label{supp_optomech_readout_factor}
\end{equation}
The synthetic means use $\Delta C=0$, so the declared margin is used only in the certificate and is not counted again as statistical noise.

Every branch is acquired on fresh repetitions at each evolution time. The two mixing phases are also independent records, i.e., only after gain calibration are they combined into $Y_{\rm cm}$. The simulation evaluates all branch variances from their total physical occupations and checks that the total occupations entering the power model are nonnegative. Consequently, neither a signed excess nor the signed differential trajectory is ever inserted into a detector variance.

For branch $\mu$, time $t_\ell$, and repetition $s$, the normalized heterodyne outcome obeys
\begin{equation}
\beta_{\mu\ell s}\sim\mathcal{CN}(0,\mu_{\mu\ell}),\quad \quad
\mu_{\mu\ell}=\nu_{{\rm det},\mu}+g_\mu n_\mu(t_\ell).
\label{supp_optomech_heterodyne_mean}
\end{equation}
Here $n_\mu$ is always the total occupation of a physical mode, never a signed excess. In the convention where vacuum heterodyne power is $1/2$, the corresponding phase-insensitive channel has $\tau_\mu=2g_\mu$ and added output occupation $n_{0,\mu}=2\nu_{{\rm det},\mu}-1$. Complete positivity requires $n_{0,\mu}\geq\max(0,\tau_\mu-1)$, equivalently
\begin{equation}
\nu_{{\rm det},\mu}\geq\max(1/2,g_\mu).
\label{supp_optomech_channel_physicality}
\end{equation}
Because $|\beta|^2$ is exponential \cite{GoodmanSupp}, the averaged power $\overline Q_{\mu\ell}$ obeys
\begin{equation}
\mathbb{E}\overline Q_{\mu\ell}=\mu_{\mu\ell},\quad \quad
\operatorname{Var}\overline Q_{\mu\ell}=\frac{\mu_{\mu\ell}^2}{M_\mu}.
\label{supp_optomech_averaged_power_moments}
\end{equation}
For the cavity-noise record, $\widehat{\delta n}_{\mu\ell}=(\overline Q_{\mu\ell}-\overline Q_{\mu,{\rm ss}})/g_\mu$, and one stationary reference is reused at every time in each branch, giving
\begin{equation}
\operatorname{Cov}(\widehat{\delta n}_{\mu\ell},\widehat{\delta n}_{\mu m})
=\delta_{\ell m}\!\left[\frac{\mu_{\mu\ell}^2}{M_\mu g_\mu^2}+\sigma_{{\rm tech},\mu}^2\right]
+\frac{\mu_{\mu,{\rm ss}}^2}{M_{\mu,{\rm ss}}g_\mu^2}.
\label{supp_optomech_shared_reference_covariance}
\end{equation}
The last term is the temporal correlation from the shared finite reference. References belonging to different branches are independent. For the differential record, $\widehat{\Delta n}_{\mu\ell}=(\overline Q^A_{\mu\ell}-\overline Q^B_{\mu\ell})/g_\mu$, and independently measured $A$ and $B$ trajectories give
\begin{equation}
\operatorname{Cov}(\widehat{\Delta n}_{\mu\ell},\widehat{\Delta n}_{\mu m})
=\delta_{\ell m}\!\left[\frac{(\mu^A_{\mu\ell})^2+(\mu^B_{\mu\ell})^2}{M_\mu g_\mu^2}+\sigma_{{\rm tech},\mu}^2\right].
\label{supp_optomech_differential_covariance}
\end{equation}
The independent $n_-$ and $n_+$ covariances add when reconstructing $Y_{\rm cm}$, after which the declared $\sigma_{{\rm tech},Y}^2$ is added once. It is not assigned separately to the constituent records.

These two covariance formulas encode different correlations. Reusing one stationary reference produces the rank-one temporal term in Eq. (\ref{supp_optomech_shared_reference_covariance}); the independently prepared differential trajectories have no such shared reference and therefore give the diagonal covariance in Eq. (\ref{supp_optomech_differential_covariance}). Statistical fluctuations of independent physical records add under subtraction. The technical rms parameter denotes one final reconstructed-channel contribution, which is why adding it to each constituent record would double count it.

The stacked covariances $\Sigma_z^{(s)}$ are checked to be positive definite, and the effective records are drawn as
\begin{equation}
\widehat{\boldsymbol{z}}^{(s)}\sim\mathcal{N}(\boldsymbol{m}_z^{(s)},\Sigma_z^{(s)}),
\quad \quad s\in\{2,3\}.
\label{supp_optomech_gaussian_draw}
\end{equation}
Here $\boldsymbol{m}_z^{(s)}=\operatorname{col}_\ell\boldsymbol{x}^{(s)}(t_\ell)$, with time-major stacking. Eq. (\ref{supp_optomech_gaussian_draw}) is exact for the declared effective model. As a large-$M_\mu$ approximation to averaged exponential powers, its smallest averaging number gives standardized skewness $2/\sqrt{2\times10^6}=1.42\times10^{-3}$. A real experiment would independently calibrate the covariance or use a conservative upper bound. The synthetic draw uses exact spectral centering and injects neither a generator perturbation nor deterministic coefficient bias.

Demodulation uses $D_{\bar\Gamma}=\operatorname{diag}(e^{\bar\Gamma t_\ell})\otimes I_3$ and transforms both record and covariance:
\begin{equation}
\widehat{\boldsymbol{Y}}=D_{\bar\Gamma}\widehat{\boldsymbol{z}},\quad \quad
\Sigma_Y=D_{\bar\Gamma}\Sigma_zD_{\bar\Gamma}^T.
\label{supp_optomech_demodulation}
\end{equation}
With $\tau=t/t_{\rm sc}$ and $\boldsymbol{c}_j=t_{\rm sc}^j\boldsymbol{p}_j$, generalized least squares (GLS) \cite{SeberLeeSupp} for $\widehat{\boldsymbol{Y}}_\ell=\sum_{j=0}^m\tau_\ell^j\boldsymbol{c}_j/j!+\boldsymbol\zeta_\ell$, with $H_{\ell j}=(\tau_\ell^j/j!)I_3$, gives
\begin{equation}
\widehat{\boldsymbol{c}}=(H^T\Sigma_Y^{-1}H)^{-1}H^T\Sigma_Y^{-1}\widehat{\boldsymbol{Y}},
\quad \quad \Gamma_c=(H^T\Sigma_Y^{-1}H)^{-1}.
\label{supp_optomech_GLS}
\end{equation}
The design contains the factorial convention of the Krylov expansion, and the full $\Sigma_Y$ retains the shared-reference correlations. Because the design is fixed and has full column rank for the stated distinct times, the GLS estimator is unbiased in the declared model and has covariance $\Gamma_c$.

We fit through cubic order for the depth-three record and quadratic order for the depth-two record, so each terminal coefficient is estimated rather than set to zero. Under the effective Gaussian model, for $D=3(m+1)$,
\begin{equation}
(\widehat{\boldsymbol{c}}-\boldsymbol{c})^T\Gamma_c^{-1}
(\widehat{\boldsymbol{c}}-\boldsymbol{c})\leq\chi^2_{D,0.95}
\label{supp_optomech_joint_ellipsoid}
\end{equation}
has probability $0.95$. Projection of this single ellipsoid gives
\begin{equation}
\varepsilon_{c,j}^2=\chi^2_{D,0.95}\lambda_{\max}\!\left(M_{\rm coh}^{1/2}E_j\Gamma_cE_j^TM_{\rm coh}^{1/2}\right).
\label{supp_optomech_epsilon}
\end{equation}
To see the projection, we maximize $\|E_j(\widehat{\boldsymbol{c}}-\boldsymbol{c})\|_{\rm coh}^2$ over the joint ellipsoid. Whitening by $\Gamma_c^{1/2}$ reduces the maximum to the largest eigenvalue in Eq. (\ref{supp_optomech_epsilon}). The resulting block radii are therefore simultaneous consequences of one coefficient event, not separate marginal intervals.

All coefficients for one preparation are controlled on the same event and the two preparations use separate $95\%$ events, with no $95\%$ claim for their intersection. Nonzero coefficients are tested using Eq. (\ref{supp_nonzero_certificate}). With $\Omega_{\rm sc}=1/t_{\rm sc}$, the rank condition is
\begin{equation}
\sigma_{\min}^{\rm col}\!\left[
M_{\rm coh}^{1/2}(\widehat{\boldsymbol{c}}_0,\ldots,\widehat{\boldsymbol{c}}_{r-1})
\right]>
\left(\sum_{j=0}^{r-1}\varepsilon_{c,j}^2\right)^{1/2}.
\label{supp_optomech_scaled_rank_test}
\end{equation}

\subsection{Numerical illustration and certificates}

We take $(\omega_{\rm m},\kappa,\gamma_{\rm m})/2\pi=(1~{\rm MHz},0.2~{\rm MHz},50~{\rm Hz})$, so $G_{\rm EP}/2\pi=49.9875~{\rm kHz}$, together with $T_{\rm m}=4~{\rm K}$, $n_{\rm p}=10$, $t_{\rm sc}=1~\mu{\rm s}$, and $41$ equally spaced times on $0\leq t\leq5~\mu{\rm s}$. Rates in the matrices are angular frequencies. We set the optical bath occupation to zero and the mechanical Bose occupation is $8.334597649\times10^4$. Solving Eq. (\ref{supp_optomech_lyapunov}) gives
\begin{equation}
S_{\rm ss}\simeq
\begin{pmatrix}20.8104654274&41.6313386895i\\-41.6313386895i&104.114784556\end{pmatrix},
\quad \quad Y_{{\rm cm},{\rm ss}}=83.2626773791,\quad \quad
\lambda_{\min}(S_{\rm ss})=3.572296775>0.
\label{supp_optomech_stationary_matrix_numerical}
\end{equation}
These are coupled-system stationary moments, not bath occupations, and transfer from the mechanical bath produces nonzero cavity occupation and coherence despite an optical-vacuum input.

The power gains, vacuum floors, averaging numbers, and final-channel technical rms values are
\begin{equation}
\begin{gathered}
(g_{\rm c},g_{\rm m},g_{\rm h})=(0.45,0.25,0.35),
\quad \nu_{{\rm det},\mu}=0.50,\\
(M_{\rm c},M_{\rm m},M_{\rm h})=(2,4,4)\times10^6,
\quad(\sigma_{{\rm tech},{\rm c}},\sigma_{{\rm tech},{\rm m}},\sigma_{{\rm tech},Y})=(3,5,7)\times10^{-4}.
\end{gathered}
\label{supp_optomech_numerical_readout}
\end{equation}
The corresponding efficiencies are $2g_\mu=(0.90,0.50,0.70)$ and satisfy Eq. (\ref{supp_optomech_channel_physicality}). Each cavity-noise reference uses $2M_\mu$ averages; branches and differential trajectories are mutually independent. The displayed synthetic realization uses the seed $20260722$.

The symbol $\mathrm{h}$ denotes either correlation-sensitive mixing branch; both use the stated gain, floor, and averaging number. The vacuum floors and gains represent physical attenuating channels under the normalization above. The stated reference size, independence structure, technical floors, time grid, fit orders, covariance construction, and seed define this illustrative pair of records.

Before noise is drawn, the exact scaled checks are
\begin{equation}
\|\boldsymbol{c}_2^{(3)}\|_{\rm coh}=5.580300163,\quad \quad
\|\boldsymbol{c}_1^{(2)}\|_{\rm coh}=17.76708887,\quad \quad
\boldsymbol{c}_3^{(3)}=\boldsymbol{c}_2^{(2)}=0.
\label{supp_optomech_exact_scaled_checks}
\end{equation}
The synthetic generator and mean map equal their nominal values. For this illustration, certification uses predeclared deterministic $2\%$ margins on the shifted-generator norm and readout map, fixed independently of the simulated record as
\begin{equation}
\Omega_{K,{\rm sc}}^{\rm L}=0.98t_{\rm sc}\Omega_*=0.8705873546,\quad \quad
\mathcal{C}_{\rm ro}=\frac{1.02}{0.98}=1.040816327.
\label{supp_optomech_realized_calibration_factors}
\end{equation}
The generator margin bounds the norm conditional on the independently calibrated nilpotent sector; it does not assert that an arbitrary $2\%$ matrix perturbation preserves $\nu_\lambda=3$. Resonant red-sideband operation is likewise assumed independently; residual detuning must be included in a four-coordinate model or as bounded bias. No such bias is added to the draw.

For scaled coefficients, Eq. (\ref{supp_closure_bound}) becomes
\begin{equation}
B_r=\mathcal{C}_{\rm ro}
\frac{\|\widehat{\boldsymbol{c}}_r\|_{\rm coh}+\varepsilon_{c,r}}
{\Omega_{K,{\rm sc}}^{\rm L}(\|\widehat{\boldsymbol{c}}_{r-1}\|_{\rm coh}-\varepsilon_{c,r-1})},
\quad \quad \mathcal{R}_r\leq B_r.
\label{supp_optomech_specialized_bounds}
\end{equation}
The rescaling $\boldsymbol{c}_j=t_{\rm sc}^j\boldsymbol{p}_j$ together with $\Omega_{K,{\rm sc}}^{\rm L}=t_{\rm sc}\Omega_K^{\rm L}$ changes conditioning but not the physical residual. The realized norms and projected radii are
\begin{equation}
\begin{array}{c|cc}
&\|\widehat{\boldsymbol{c}}_j\|_{\rm coh}&\varepsilon_{c,j}\\ \hline
(3),\,j=2&5.212816954&1.153546707\\
(3),\,j=3&0.393908922&0.670519769\\
(2),\,j=1&17.66421801&0.552583801\\
(2),\,j=2&0.088929204&0.358618701
\end{array}
\label{supp_optomech_realized_norms}
\end{equation}
and hence predecessor/terminal ratios $(4.51895,0.58747)$ and $(31.96659,0.24798)$. Thus the predecessors, but not the terminal coefficients, are certified as nonzero. As noted earlier, a ratio below unity is not a zero inference. The rank margins are $1.98887>1.46076$ and $17.66416>0.624801$, certifying $\operatorname{rank}P_3=3$ and $\operatorname{rank}P_2=2$ on their respective events. The simulation-only Mahalanobis checks are $16.3682<\chi^2_{12,0.95}=21.0261$ and $5.6517<\chi^2_{9,0.95}=16.9190$, so the displayed draws lie in their separate events. The latter diagnostic is unavailable for experimental data because the true coefficients are unknown.

For the cavity-noise preparation, the certified quadratic coefficient and the independently validated sector index $\nu_\lambda=3$ together force $d_{\rm act}=3$; the rank margin further shows that all three reached directions are independently resolved. For the differential preparation, the linear coefficient and rank margin establish two reached and independently resolved directions. The terminal estimates are retained only to form conservative upper bounds.

Finally,
\begin{equation}
\mathcal{R}_3\leq B_3=0.31349480,
\quad \quad
\mathcal{R}_2\leq B_2=0.03126870.
\label{supp_optomech_numerical_bounds}
\end{equation}
Both simulated terminal vectors vanish analytically: $N_{\rm coh}^3=0$ fixes $\boldsymbol{q}_3^{(3)}=0$, while $N_{\rm coh}^2\boldsymbol{x}_0^{(2)}=0$ fixes $\boldsymbol{q}_2^{(2)}=0$. The positive $B_r$ are conservative finite-error bounds obtained without substituting these above identities. For the stated $\tau_{\rm term}=0.10$, the general inequality certifies the tolerance only for depth two, but its failure for depth three does not imply $\mathcal{R}_3>\tau_{\rm term}$. In the full exact-EP model, the identities fix both endpoints, while the noisy records certify the nonzero predecessors and ranks. 

All probability statements in this subsection are conditional on the declared Gaussian record model and the relevant preparation-specific event. The two margins are fixed deterministic inputs, not random variables or estimates obtained from the simulated record. The reported $B_r$ therefore combine finite coefficient precision with calibrated gain and rate uncertainty, but do not assign statistical distributions to the calibration margins.

The reported numbers describe one fixed illustrative seed and are not used to claim typical performance over repeated noise realizations. Any optimization using an experimental record would require independent data or inclusion of all alternatives in the joint uncertainty construction.

\subsection{Alternate platform: BEC-in-cavity specialization}
\label{supp:bec_cavity_specialization}

An alternate physical realization can be obtained from Bose-Einstein condensates (BECs) in optical cavities \cite{KesslerSupp,WolkeSupp,KlinderSupp}. For the $^{87}{\rm Rb}$ cavity platform of Ke{\ss}ler \textit{et al.}, we take $N=10^5$, $\Delta_0/2\pi=-0.50~{\rm Hz}$, $\omega_{\rm rec}/2\pi=3.55~{\rm kHz}$, and field-amplitude decay $\kappa_{\rm H}/2\pi=4.50~{\rm kHz}$ \cite{KesslerSupp}. Their convention $\dot a=-\kappa_{\rm H}a+\cdots$ gives $\kappa/2\pi=9.00~{\rm kHz}$ here. In the undepleted, collisionless two-mode limit, the symmetric density mode has $\Omega_{\rm B}=4\omega_{\rm rec}$ and $g_0=\Delta_0\sqrt{N/8}$. After linearization about the driven steady state and tuning to the red sideband, one has the interaction Hamiltonian
\begin{equation}
H_{\rm int}=\hbar G(\delta a^\dagger\delta b+\delta a\delta b^\dagger),
\quad \quad G=|g_0|\sqrt{\bar n},
\label{supp_bec_red_sideband}
\end{equation} for the photon- and phonon-fluctuation operators $\delta a$ and $\delta b$. Choosing the illustrative value $\gamma_{\rm B}/2\pi=0.20~{\rm kHz}$ yields $\Omega_{\rm B}/2\pi=14.20~{\rm kHz}$, $G_{\rm EP}/2\pi=2.20~{\rm kHz}$, $|g_0|/2\pi=55.90~{\rm Hz}$, and $\bar n_{\rm EP}=1.5488\times10^3$. The static displacement $|\beta|\simeq |g_0|\bar n_{\rm EP}/\Omega_{\rm B}=6.10$ gives $|\beta|^2/N\simeq3.7\times10^{-4}$, consistent with negligible depletion. The coherence drift is Eq. (\ref{supp_optomech_Ncoh}) with $\gamma_{\rm m}\mapsto\gamma_{\rm B}$. Illustrative finite-imprecision records for the two preparations are shown in Fig. \ref{fig:bec_krylov}.

\begin{figure}[t]
\centering
\begin{minipage}[t]{0.488\columnwidth}\centering
\includegraphics[width=\linewidth]{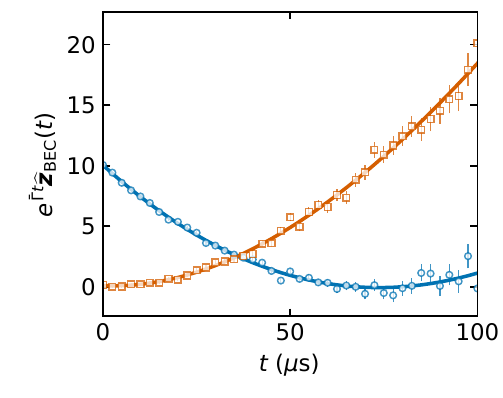}\par{\sffamily\small (a)}
\end{minipage}\hfill
\begin{minipage}[t]{0.488\columnwidth}\centering
\includegraphics[width=\linewidth]{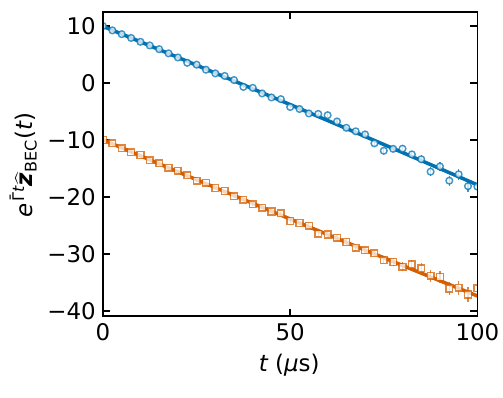}\par{\sffamily\small (b)}
\end{minipage}
\caption{\justifying{Simulated demodulated cavity and BEC-sidemode records for the (a) cavity-noise and (b) differential preparations. Blue circles denote the cavity coordinate and orange squares the BEC sidemode coordinate. In (b), the code samples the final signed coordinates directly and assigns the declared rms imprecisions to those records; it does not simulate the constituent physical measurements. Error bars are marginal standard deviations.}}
\label{fig:bec_krylov}
\end{figure}

The experimentally motivated nominal readout is
\begin{equation}
\boldsymbol{z}_{\rm BEC}=C_{\rm BEC}\boldsymbol{x},\quad \quad
C_{\rm BEC}=\begin{pmatrix}1&0&0\\0&1&0\end{pmatrix},
\label{supp_bec_readout_map}
\end{equation}
corresponding to cavity transmission and time-of-flight momentum imaging \cite{WolkeSupp,KlinderSupp}. It cannot have projected rank three, although a nonzero quadratic output still certifies depth-three activation. For the differential preparation, the projected vectors $n_{\rm p}(1,-1)^T$ and $-dn_{\rm p}(1,1)^T$ are independent. Because $C_{\rm BEC}$ is not injective on the full endpoint sector, this two-coordinate illustration does not claim the state-level closure bound of Eq. (\ref{supp_closure_bound}).

The simulation uses $n_{\rm p}=10$, $t_{\rm sc}=20~\mu{\rm s}$, $41$ points on $0$--$100~\mu{\rm s}$, seed $20260728$, and independently added final-record rms imprecisions $(0.060,0.080)$. These illustrative values are not inferred from the cited experiments and for the differential case the code samples the final signed record directly without constructing the constituent trajectories. The same cubic and quadratic polynomial fits are used, now with two output coordinates and independent Gaussian imprecision at each time. The readout example is intended to establish transferable activation and rank statements, not a complete detector model for either cited experiment.

Separate joint-$95\%$ ellipsoids give
\begin{equation}
(r_2^{(3)},r_3^{(3)})=(2.63722,0.413718),\quad \quad
(r_1^{(2)},r_2^{(2)})=(24.5630,0.424716),
\label{supp_bec_coefficient_ratios}
\end{equation}
where $r_j^{(s)}=\|\widehat{\boldsymbol{c}}_j^{(s)}\|_2/\varepsilon_{c,j}^{(s)}$. Thus the depth-three predecessor is nonzero and, with calibrated $\nu_\lambda=3$, certifies $d_{\rm act}=3$; neither terminal coefficient is certified as nonzero. For the differential record,
\begin{equation}
\sigma_{\min}^{\rm col}\!\left[
(\widehat{\boldsymbol{c}}_0^{(2)},\widehat{\boldsymbol{c}}_1^{(2)})
\right]
=7.83253>0.364921
\end{equation}
certifies $\operatorname{rank}P_2=2$. The simulation-only joint-event checks are $6.286217<15.507313$ and $4.770922<12.591587$. Residual detuning, collisions, counterrotating terms, and additional momentum modes must be included in an enlarged calibrated sector or bounded as coefficient bias in an experiment.

We conclude by emphasizing that the BEC parameters rely on the undepleted, collisionless two-mode and red-sideband approximations. The quoted depletion check supports, but does not prove, their accuracy for a specific apparatus. A quantitative experimental certificate must independently validate the selected sector and propagate uncertainty in $\gamma_{\rm B}$, the shifted detuning, mode truncation, and readout calibration.

\begingroup
\renewcommand{\refname}{Supplementary References}

\endgroup


\begin{thebibliography}{99}

\bibitem{Rotter_2009}
I. Rotter, {\it A non-Hermitian Hamilton operator and the physics of open quantum systems}, J. Phys. A: Math. Theor. \textbf{42}, 153001 (2009).

\bibitem{Berry_2004}
M. V. Berry, {\it Physics of nonhermitian degeneracies}, Czech. J. Phys. \textbf{54}, 1039 (2004).

\bibitem{Heiss_2012}
W. D. Heiss, {\it The physics of exceptional points}, J. Phys. A: Math. Theor. \textbf{45}, 444016 (2012).

\bibitem{Bagarello_2015}
F. Bagarello, J.-P. Gazeau, F. H. Szafraniec, and M. Znojil (Eds.), {\it Non-selfadjoint operators in
quantum physics: Mathematical aspects} (Wiley, Hoboken, 2015).

\bibitem{Longhi_2017}
S. Longhi, {\it Parity-time symmetry meets photonics: A new twist in non-Hermitian optics}, EPL \textbf{120}, 64001 (2017).

\bibitem{Bergholtz_2017}
E. J. Bergholtz, J. C. Budich, and F. K. Kunst, {\it Exceptional topology of non-Hermitian systems}, Rev. Mod. Phys. \textbf{93}, 015005 (2021).

\bibitem{Dembowski_2001}
C. Dembowski, H.-D. Gr\"af, H. L. Harney, A. Heine, W. D. Heiss, H. Rehfeld, and A. Richter, {\it Experimental observation of the topological structure of exceptional points}, Phys. Rev. Lett. \textbf{86}, 787 (2001).

\bibitem{Lee_2009}
S.-B. Lee, J. Yang, S. Moon, S.-Y. Lee, J.-B. Shim, S. W. Kim, J.-H. Lee, and K. An,
{\it Observation of an exceptional point in a chaotic optical microcavity}, Phys. Rev. Lett. \textbf{103}, 134101 (2009).

\bibitem{Kim_2016}
K.-H. Kim, M.-S. Hwang, H.-R. Kim, J.-H. Choi, Y.-S. No, and H.-G. Park, {\it Direct observation of exceptional points in coupled photonic-crystal lasers with asymmetric optical gains}, Nat. Commun. \textbf{7}, 13893 (2016). 

\bibitem{Bergman_2021}
A. Bergman, R. Duggan, K. Sharma, M. Tur, A. Zadok, and A. Al\`u, {\it Observation of anti-parity-time-symmetry, phase transitions and exceptional points in an optical fibre}, Nat. Commun. \textbf{12}, 486 (2021). 

\bibitem{Xiao_2021}
L. Xiao, T. Deng, K. Wang, Z. Wang, W. Yi, and P. Xue, {\it Observation of non-Bloch parity-time symmetry and exceptional points}, Phys. Rev. Lett. \textbf{126}, 230402 (2021).

\bibitem{Soleymani_2022}
S. Soleymani, Q. Zhong, M. Mokim, S. Rotter, R. El-Ganainy, and \c{S}. K. \"Ozdemir, {\it Chiral and degenerate perfect absorption on exceptional surfaces}, Nat. Commun. \textbf{13}, 599 (2022).

\bibitem{Cartarius_2011}
H. Cartarius and N. Moiseyev, {\it Fingerprints of exceptional points in the survival probability of resonances in atomic spectra}, Phys. Rev. A \textbf{84}, 013419 (2011).

\bibitem{Minganti_2019}
F. Minganti, A. Miranowicz, R. W. Chhajlany, and F. Nori, {\it Quantum exceptional points of non-Hermitian Hamiltonians and Liouvillians: The effects of quantum jumps}, Phys. Rev. A \textbf{100}, 062131 (2019).

\bibitem{Naghiloo_2019}
M. Naghiloo, M. Abbasi, Y. N. Joglekar, and K. W. Murch, {\it Quantum state tomography across the exceptional point in a single dissipative qubit}, Nat. Phys. \textbf{15}, 1232 (2019).

\bibitem{ChenJordan_2020}
H.-Z. Chen, T. Liu, H.-Y. Luan, R.-J. Liu, X.-Y. Wang, X.-F. Zhu, Y.-B. Li, Z.-M. Gu, S.-J. Liang, H. Gao, L. Lu, L. Ge, S. Zhang, J. Zhu, and R.-M. Ma, {\it Revealing the missing dimension at an exceptional point}, Nat. Phys. \textbf{16}, 571 (2020).

\bibitem{ChenTomography_2025}
Y.-Y. Chen, K. Li, L. Zhang, Y.-K. Wu, J.-Y. Ma,
H.-X. Yang, C. Zhang, B.-X. Qi, Z.-C. Zhou, P.-Y. Hou, Y. Xu, and L.-M. Duan, {\it Quantum tomography of a third-order exceptional point in a dissipative trapped ion}, Nat. Commun. \textbf{16}, 7478 (2025).

\bibitem{Hashemi_2022}
A. Hashemi, K. Busch, D. N. Christodoulides,
\c{S}. K. \"Ozdemir, and R. El-Ganainy, {\it Linear response theory of open systems with exceptional points}, Nat. Commun. \textbf{13}, 3281 (2022).

\bibitem{Wiersig_2022}
J. Wiersig, {\it Response strengths of open systems at exceptional points}, Phys. Rev. Research \textbf{4}, 023121 (2022).

\bibitem{Horn_2013}
R. A. Horn and C. R. Johnson, {\it Matrix analysis}, 2nd ed.
(Cambridge University Press, Cambridge, 2013).

\bibitem{Prosen_2008}
T. Prosen, {\it Third quantization: A general method to solve master equations for quadratic open Fermi systems}, New J. Phys. \textbf{10}, 043026 (2008).

\bibitem{Barthel_2022}
T. Barthel and Y. Zhang, {\it Solving quasi-free and quadratic Lindblad master equations for open fermionic and bosonic systems}, J. Stat. Mech. \textbf{2022}, 113101 (2022).

\bibitem{Buca_2012}
B. Bu\v{c}a and T. Prosen, {\it A note on symmetry reductions of the Lindblad equation: Transport in constrained open spin chains}, New J. Phys. \textbf{14}, 073007 (2012).

\bibitem{Zhang_2025}
D.-J. Zhang and D. M. Tong, {\it Krylov shadow tomography: Efficient estimation of quantum Fisher information}, Phys. Rev. Lett. \textbf{134}, 110802 (2025).

\bibitem{Bourgeois_2026}
J. Bourgeois, G. Blasi, and G. Haack, {\it Transport approach to quantum state tomography}, Phys. Rev. Lett. \textbf{136}, 010802 (2026).

\bibitem{Peruzzo_2026}
M. Peruzzo, T. Grigoletto, and F. Ticozzi, {\it Reconstructing quantum states and expectations via dynamical tomography}, Phys. Rev. A \textbf{113}, 032435 (2026).

\bibitem{Mortensen_2018}
N. A. Mortensen, P. A. D. Gonçalves, M. Khajavikhan, D. N. Christodoulides, C. Tserkezis, and C. Wolff, {\it Fluctuations and noise-limited sensing near the exceptional point of parity-time-symmetric resonator systems}, Optica \textbf{5}, 1342 (2018). 

\bibitem{Aspelmeyer_2014}
M. Aspelmeyer, T. J. Kippenberg, and F. Marquardt, {\it Cavity optomechanics}, Rev. Mod. Phys. \textbf{86}, 1391 (2014).

\bibitem{Jing_2017}
H. Jing, \c{S}. K. \"Ozdemir, H. L\"u, and F. Nori, {\it High-order exceptional points in optomechanics}, Sci. Rep. \textbf{7}, 3386 (2017).

\bibitem{Ghosh_2026}
A. Ghosh and M. Bhattacharya, {\it Quantum jumps in open cavity optomechanics and Liouvillian versus Hamiltonian exceptional points}, Phys. Rev. A \textbf{113}, 053530 (2026).

\bibitem{Xu_2016}
H. Xu, D. Mason, L. Jiang, and J. G. E. Harris, {\it Topological energy transfer in an optomechanical system with exceptional points}, Nature \textbf{537}, 80 (2016).

\bibitem{Shiralieva_2026}
A. Shiralieva, G. A. Starkov, and B. Trauzettel,
{\it Multiblock exceptional points in open quantum systems}, Phys. Rev. A \textbf{113}, 052202 (2026).

\bibitem{Wang_2024}
C. Wang, L. Banniard, K. B{\o}rkje, F. Massel, L. M. de L\'epinay, and M. A. Sillanp\"a\"a, {\it Ground-state cooling of a mechanical oscillator by a noisy environment}, Nat. Commun. \textbf{15}, 7395 (2024). 

\bibitem{Kalman_1960}
R. E. Kalman, {\it On the general theory of control systems}, In: Proceedings of the First IFAC Congress, Moscow (Butterworths, London, 1961).

\bibitem{HoKalman_1966}
B. L. Ho and R. E. Kalman, {\it Effective construction of linear state-variable models from input/output functions},
Regelungstechnik \textbf{14}, 545 (1966).

\bibitem{GroblacherCooling_2009}
S. Gr\"oblacher, J. B. Hertzberg, M. R. Vanner, G. D. Cole, S. Gigan, K. C. Schwab, and M. Aspelmeyer, {\it Demonstration of an ultracold micro-optomechanical oscillator in a cryogenic cavity}, Nat. Phys. \textbf{5}, 485 (2009).

\bibitem{GroblacherStrongCoupling_2009}
S. Gr\"oblacher, K. Hammerer, M. R. Vanner, and M. Aspelmeyer, {\it Observation of strong coupling between a micromechanical resonator and an optical cavity field}, Nature \textbf{460}, 724 (2009).

\bibitem{Weaver_2017}
M. J. Weaver, F. Buters, F. Luna, H. Eerkens, K. Heeck, S. de Man, and D. Bouwmeester, {\it Coherent optomechanical state transfer between disparate mechanical resonators}, Nat. Commun. \textbf{8}, 824 (2017). 

\end{thebibliography}

\begin{thebibliography}{99}

\bibitem{KatoSupp}
T. Kato, {\it Perturbation theory for linear operators}, 2nd ed. (Springer, Berlin, 1995).

\bibitem{HornJohnsonSupp}
R. A. Horn and C. R. Johnson, {\it Matrix analysis}, 2nd ed.
(Cambridge University Press, Cambridge, 2013).

\bibitem{GoodmanSupp}
N. R. Goodman, {\it Statistical analysis based on a certain multivariate complex Gaussian distribution (An introduction)}, Ann. Math. Statist. \textbf{34}, 152 (1963).

\bibitem{SeberLeeSupp}
G. A. F. Seber and A. J. Lee,
{\it Linear regression analysis}, 2nd ed. (Wiley, Hoboken, 2003).

\bibitem{KesslerSupp}
H. Ke{\ss}ler, J. Klinder, M. Wolke, and A. Hemmerich, {\it Optomechanical atom-cavity interaction in the sub-recoil regime}, New J. Phys. \textbf{16}, 053008 (2014).

\bibitem{KlinderSupp}
J. Klinder, H. Ke{\ss}ler, M. Wolke, L. Mathey, and A. Hemmerich, {\it Dynamical phase transition in the open Dicke model}, Proc. Natl. Acad. Sci. U.S.A. \textbf{112}, 3290 (2015).

\bibitem{WolkeSupp}
M. Wolke, J. Klinner, H. Ke{\ss}ler, and A. Hemmerich, {\it Cavity cooling below the recoil limit}, Science \textbf{337}, 75 (2012).

\end{thebibliography}
\end{document}